%% file: neurips_2026.tex
\documentclass{article}

\usepackage[main, final]{neurips_2026}

\usepackage[utf8]{inputenc} 
\usepackage[T1]{fontenc}    
\usepackage{hyperref}       
\usepackage{url}            
\usepackage{booktabs}       
\usepackage{amsfonts}       
\usepackage{nicefrac}       
\usepackage{microtype}      
\usepackage{xcolor}         

\usepackage{enumitem,mathtools,amssymb,comment}
\usepackage{multirow}
\usepackage{amsthm}
\usepackage{subcaption}
\usepackage{listings}
\usepackage{rotating}
\newcommand{\ConceptTerm}[1]{\ensuremath{\text{\texttt{#1}}}}
\newcommand{\ConPAtt}{ConPAtt}
\DeclareMathOperator*{\argmin}{argmin} 

\theoremstyle{plain}
\newtheorem{theorem}{Theorem}[section]

\newtheorem{lemma}[theorem]{Lemma}

\newtheorem{corollary}[theorem]{Corollary}
\theoremstyle{definition}
\newtheorem{definition}[theorem]{Definition}

\theoremstyle{remark}

\usepackage[capitalize,noabbrev]{cleveref}
\crefalias{lemma}{lemma}
\crefname{lemma}{Lemma}{Lemmata}
\Crefname{lemma}{Lemma}{Lemmata}

\title{Attack logics, not outputs: Concept-based Property Attacks for Shortcut Elimination in Neural Networks }

\author{%
  Raik Dankworth \\
  Institute of Software Engineering and Programming Languages\\
  University of Luebeck\\
  \texttt{r.dankworth@uni-luebeck.de} \\
  \And
  Felix Mächtle\\
  Institute of IT Security \\
  University of Luebeck\\
  \texttt{f.maechtle@uni-luebeck.de} \\
  \And
  Thomas Eisenbarth \\
  Institute of IT Security \\
  University of Luebeck\\
  \texttt{thomas.eisenbarth@uni-luebeck.de} \\
  \And
  Geinsa Schwalbe \\
  Institute of Software Engineering and Programming Languages\\
  University of Luebeck\\
  \texttt{gesina.schwalbe@uni-luebeck.de} \\
}

\begin{document}

\maketitle

\begin{abstract}
    \input{abstract} 
\end{abstract}

\input{main}

{\small
\bibliographystyle{abbrv}
\bibliography{main}
}


\appendix
\input{appendix}


\end{document}

%% file: abstract.tex
Deep neural networks (NNs) for computer vision are vulnerable to adversarial attacks, such that minuscule malicious changes to inputs may induce unintuitive outputs.
One key approach to verify and mitigate such robustness issues is to falsify expected output behavior, which enables local robustness guarantees and adversarial training.
Due to the black-box nature of NNs, current attacks only falsify a class of the \emph{final output},
such as flipping from $\ConceptTerm{stop\_sign}$ to $\neg\ConceptTerm{stop\_sign}$.
In this work we argue that robustness analysis should focus more generally on logical inconsistencies in concept-level reasoning rather than solely on output misclassification. We formalize this perspective through concept-based properties, i.e., constraints over human-interpretable concepts such as
$\ConceptTerm{red}\land\ConceptTerm{octogonal}\implies \ConceptTerm{stop\_sign}$
and study the falsification of such properties as a principled form of adversarial attack.
Furthermore, an implementation of concept-based properties on already trained NNs is proposed using techniques from concept-based explainable artificial intelligence (C-XAI).
For that we formally prove that attacks on concept-based properties are expected to have a reduced search space compared to simple class falsification, whilst arguably be more aligned with intuitive robustness targets.
The practical effectiveness of our approach for logically informed adversarial training is shown in an empirical study across multiple models and datasets, including both vision and a security-relevant code processing task.
Our results demonstrate that concept-level falsification generalizes classical adversarial attacks: Retraining on our novel attacks maintains robustness gains as compared to classical adversarial retraining whilst better eliminating semantic biases.

%% file: main.tex
\section{Introduction}

Neural Networks (NNs) excel in processing subsymbolic inputs like images, and are increasingly being considered for use in safety-critical domains~\cite{rech_artificial_2024}.
This makes it crucial to ensure their robust and intuitive generalization, at least around known training cases.
One tool to evaluate vulnerability to malicious attacks are 
Adversarial Attacks (AAs):
These craft inputs that induce incorrect or unexpected predictions of an NN $f$, using minimal modifications to a correctly handled input $\mathbf{x}$ with $\mathbf{y}=f(\mathbf{x})$
\cite{%
suryanto_dta_2022,li_physical-world_2023,hu_adversarial_2023,zheng_physical_2024,
zhu_learning_2024,wang_boosting_2024,ming_boosting_2024
}.
However, existing attacks solely focus on altering the model's final output, i.e., falsify $\forall \mathbf{x}' \in \text{Nbhd}(\mathbf{x})\colon  f(\mathbf{x}') = y$ for some neighborhood $\text{Nbhd}(\mathbf{x})$ around $\mathbf{x}$ such as an $\epsilon$-ball. 
This disregards whether the prediction still conforms to high-level, interpretable properties.
Common examples of known properties are sufficient conditions, e.g., $\ConceptTerm{red}(\mathbf{x})\land \ConceptTerm{octogonal}(\mathbf{x})\implies \ConceptTerm{stop\_sign}(\mathbf{x})$ in traffic sign recognition from images $x$;
and necessary conditions, like $\neg\ConceptTerm{octogonal}(\mathbf{x})\implies\neg \ConceptTerm{stop\_sign}(\mathbf{x})$.
More general, we call rules involving unary predicates that are not available from the NN outputs \emph{concept-based properties}.
Rich semantic rules are known to be well suited for runtime plausibility monitoring \cite{%
schwalbe2022enabling,%
giunchiglia2022roadr%
} and respective correction of NN outputs 
\cite{%
giunchiglia2022roadr,%
ledaguenel2024improving,
badreddine2022logic%
}. 
In particular, they don't constrain the local \emph{output} to be correct, but the underlying \emph{general logical reasoning} locally around the sample.

One reason why falsification of such informative constraints is not considered for attack generation so far is that they require outputs for all involved predicates---not only the available final output, like $\ConceptTerm{stop\_sign}$.
These additional outputs however, might need a considerable amount of training data or hyperparameter tuning if added right away during the training; or, even worse, not all properties and thus not all required concepts might be known at training time due to specification gaps or later domain transfer \cite{willers2020safety}. 
Meanwhile, NNs automatically learn to encode task-related concepts in their intermediate outputs. For example, when trained for \ConceptTerm{stop\_sign} recognition, the NN may implicitly learn to identify \ConceptTerm{octagons}, \ConceptTerm{red} color, and the \ConceptTerm{stop\_label}.
Post-hoc supervised concept-based explainability methods~\cite{bau_network_2017,fong_net2vec_2018,crabbe_concept_2022,oikarinen_clip-dissect_2022} can recover this information in a very sample-efficient manner with minimal additions to the NN structure. We use this as key to access concepts post-hoc.

Altogether, we propose and analyze a general AA goal ---the \emph{Concept-based Property Attack (\ConPAtt)}---that explicitly targets falsification of symbolic \textit{concept-based properties} over non-symbolic inputs.
As we will show, our formulation offers a more general way to define both targeted and untargeted attacks.
As opposed to classical attacks that purely change the output, our attack on $\neg\ConceptTerm{octogonal}\implies\neg\ConceptTerm{stop\_sign}$ can produce an image still classified as \ConceptTerm{stop\_sign}, in which the \ConceptTerm{octogonal} concept is no longer recognized.
This newly allows to uncover failure cases with semantically inconsistent yet high-confidence predictions that are invisible to standard attacks.
A practical benefit we show empirically is that these semantic-aware adversarial examples do not only (1) achieve competitive robustness gains when used for adversarial training, but (2) are also better in eliminating semantic bias as compared to standard AA formulations.
As we show, standard white-box attack techniques can still easily and efficiently be applied, producing meaningful attacks and a more constrained adversarial space as compared to traditional AAs.

\paragraph{Contributions.}
Our main contributions are:
\begin{itemize}[nosep,leftmargin=1em]
    \item We introduce \ConPAtt, a general XAI-supported adversarial attack objective that targets concept-based properties rather than just NN outputs.
    \item We prove that \ConPAtt{} generalizes both classical targeted and untargeted AA formulations, but operates on a same-sized or smaller adversarial space.
    \item We empirically show that \ConPAtt{} enables semantic perturbations that achieve competitive robustness under adversarial retraining, and improved semantic bias correction in vision and language models.
\end{itemize}


\section{Related Work}

\paragraph{Adversarial Attacks} 
AAs generally search within the vicinity of an input sample $x$ for minimally perturbed variants $\tilde{x}=x+\epsilon$ that have a malicious effect on the NN's output \cite{khamaiseh2022adversarial}.
The perturbations can be arbitrary (digital AAs, considered here) \cite{szegedy_intriguing_2014,wu_skip_2019,su_one_2019,wang_enhancing_2021,zhu_learning_2024,wang_boosting_2024}, or
further constrained to realistic changes (physical AAs)
\cite{li_physical-world_2023,hu_adversarial_2023,%
eykholt_robust_2018,liu_perceptual-sensitive_2019,suryanto_dta_2022,zheng_physical_2024}. However, the minimality makes the changes often invisible or difficult to see for humans.
At the methodological level, black-box approaches only require access to NN inputs and outputs \cite{liu_perceptual-sensitive_2019,suryanto_dta_2022,hu_adversarial_2023,huang_safari_2023,wang_rfla_2023,zheng_physical_2024}. White-box attacks, as considered here, instead exploit NN model internals such as the gradient for a more efficient search \cite{szegedy_intriguing_2014,wu_skip_2019,wang_enhancing_2021,li_physical-world_2023,zhu_learning_2024,wang_boosting_2024}.
Generally, AAs can be seen as a subfield of NN verification that falsifies a continuity property \cite{liu2021algorithms,sun2018concolic}.
Thus, usual search, reachability analysis, and---most prominently---optimization techniques are applicable to find or disprove adversarial examples \cite{khamaiseh2022adversarial}.
Regarding types of specifications beyond continuity properties,
Baninajjar et al.~\cite{baninajjar2025formal} recently formally investigated local deviations between two NNs, and approaches such as Scenic \cite{fremont_scenic_2019} and VerifAI \cite{dreossi_verifai_2019} demonstrate how formal specifications can be used to generate and analyze simulation-based scenarios with symbolic inputs. In contrast, our approach targets AAs on non-symbolic image inputs, which prevents the direct use of such tools but similarly requires formal specifications.

\paragraph{Concept-based Explainability}
Concept-based explainability generally aims to associate human-interpretable concepts with representations in NN latent space \cite{lee2025conceptbased,poeta2023conceptbased,schwalbe2022concept}.
This includes understanding which concepts are relevant to the decision and to what extent \cite{wan_nbdt_2020,oikarinen_clip-dissect_2022}, and how these can be accurately recognized in NNs \cite{fong_net2vec_2018,maglogiannis_verification_2021}.
If concept definitions in form of labeled samples are available at training time, ante-hoc approaches~\cite{koh_concept_2020,wan_nbdt_2020,yuksekgonul_post-hoc_2022,oikarinen_label-free_2022,yang_language_2023} can train individual neurons to activate for the concept.
We here instead consider post-hoc approaches: These train a simple model to predict the concept of interest from an NN layer's activation \cite{kim2018interpretability}. Other than single-neuron-associations \cite{bau2017network,olah2017feature}, or complex models \cite{crabbe2022concept,zhang2021invertible}, linear models considered here \cite{fong2018net2vec,kim2018interpretability,graziani2018regression,mikriukov2025local} pose a good tradeoff between capturing the entanglement of representations \cite{fong2018net2vec,dreyer2024pure}, interpretability \cite{kim2018interpretability}, and favorably simple representation of the concept as halfspace in the NN's latent space.

\paragraph{XAI and Verification}
Prior work has shown that concept-based explanation methods are vulnerable to adversarial attacks.
Perturbations can mislead attribution \cite{slack_fooling_2020} and concept-based tools \cite{mikriukov_unveiling_2024,brown_making_2023}, and adversarial examples significantly alter the internal concept composition of NNs \cite{mikriukov_unveiling_2024}, confirming the general fragility of interpretability methods \cite{ghorbani_interpretation_2019}.
However, these studies target concepts in isolation, without considering their joint relation to model predictions.

Beyond highlighting vulnerabilities, concept outputs have also been used for verification.
Mangal et al. \cite{mangal_concept-based_2024} employed vision–language models to check concept-based properties.
While expressive, this approach relies on semantic similarity in multimodal embeddings (e.g., CLIP \cite{radford_learning_2021}), which can introduce linguistic ambiguity as well as imprecision for similar terms with small visual differences, e.g., \ConceptTerm{circle} versus \ConceptTerm{octagon}.
Moreover, it is restricted to the latent space of a specific layer, although simple visual concepts may predominantly appear earlier and diminish in later layers.
Cheng et al.~\cite{cheng_towards_2020} proposed specifications close to the output layer, but without decomposing them into underlying concepts and by employing an additional NN.
Semantic losses \cite{xu2018semantic,badreddine2022logic} like logic tensor networks \cite{badreddine2022logic} suggest to directly train concept-based rules into the network. These techniques, however, are only used for updating the NN, not for verification as done in \cite{schwalbe2022enabling}, and not for AAs. Furthermore, they rely on concepts being direct outputs of the NN. Even further decoupling the verification from the NN's learned representations and thus exacerbating training efforts, Xie et al.~\cite{xie2022neurosymbolic} even trained completely separate NNs for predicting the concepts. 
Our work also directly addresses the relationship between concepts and model outputs, a perspective that has received little attention so far \cite{schwalbe2022enabling}. However, similar to the verification testing techniques from \cite{cheng_towards_2020,schwalbe2022enabling}, we suggest to keep training and verification efforts low by using faithful explainability techniques to access concept predictions, and we newly apply the setup to AAs.

\section{Background}

\paragraph{Adversarial Attacks}
Let $\mathbf{x} \in \mathcal{X}$ be a real image, $\mathbf{y} \in \mathcal{Y}$ be its true label, and $f\colon \mathcal{X} \to \mathcal{Y}$ be a NN.
An AA seeks an adversarial example $\mathbf{x}^{\text{adv}} \coloneqq \mathbf{x} + \epsilon \in \mathcal{X}$ so that its output is (sufficiently) different from the original, 
and the perturbation~$\epsilon$ is minimal to an objective function $o$ (usually the L1, L2, or L-infinity norm on the input for digital attacks).
Sufficient difference can be formulated in terms of a $\mathbf{y}$-specific partition of the output set $\mathcal{Y}$ into a benign output set
$\mathcal{Y}^+
\subset \mathcal{Y}$ with $f(\mathbf{x})\in\mathcal{Y}^+$, and a malicious one $\mathcal{Y}^- \coloneqq \mathcal{Y} \setminus \mathcal{Y}^+$. 
The search for the minimum perturbation~$\epsilon$ then is the optimization problem
\begin{align}
\SwapAboveDisplaySkip
    &{\textstyle\argmin_\epsilon} \> o(\epsilon)  \qquad
    \text{s.t. }  f(\mathbf{x} + \epsilon) \in \mathcal{Y}^-
    \;.
\end{align}
Adversarial attack strategies for classification are commonly categorized as \emph{targeted} or \emph{untargeted}, depending on the definition of the set~$\mathcal{Y}^{-}$:
\[
\mathcal{Y}^{-} =
\begin{cases}
  \{\, \mathbf{y} \in \mathcal{Y} \mid \mathbf{y} \nvDash l \,\}, & \text{(untargeted)}, \\
  \{\, \mathbf{y} \in \mathcal{Y} \mid \mathbf{y} \vDash l' \,\}, & \text{(targeted)},
\end{cases}
\]
where $l$ denotes the true class and $l' \neq l$ the targeted incorrect class.
The satisfaction relation $\mathbf{y} \vDash l$ indicates that instance~$\mathbf{y}$ is assigned label~$l$ by the classifier, while $\mathbf{y} \nvDash l$ denotes the negation thereof.

\paragraph{Post-hoc Concept Extraction} 
Let $C$ be a set of concepts (e.g., $C=\{\ConceptTerm{red}, \ConceptTerm{orthogonal}\}$), and assume a possibly small classification dataset $\mathcal{D}_c = \left((\mathbf{x}_k, \mathbf{y}_{c,k})\right)_k$ is available per concept $c\in C$. Further denote by $f_{i\to j}\colon \mathcal{X}_i\to\mathcal{X}_j$ the NN part that maps from the $i$th to the $j$th layer.
Through linear post-hoc concept extraction, additional concept outputs are added to the NN by attaching for each $c$ a linear classification model $f_{i\to c}\colon \mathcal{X}_i\to \mathcal{C}_c=[0,1]$ to the $i$th hidden layer as illustrated in \Cref{fig:post-hoc-network}. Keeping the NN's weights fixed, the weights of $f_{i\to c}$ are trained on pairs $\left((f_{\to i}(\mathbf{x}_k), \mathbf{y}_{c,k})\right)_k$,
such that $c$'s concept function $f_c = f_{i\to c}\circ f_{\to i}\colon \mathcal{X}\to \mathcal{C}_c$ correctly predicts presence of the concept in an input image.
Note that $f_{i\to c}$ being linear, conveniently makes any subspace $\{v\in \mathcal{X}_i\mid f_{i\to c}(v) > \theta\}$ an affine linear half-space.
In the following, we denote by $f_C = (f_c)_{c\in C}\colon \mathcal{X} \to \mathcal{C} = (\mathcal{C}_c)_{c\in C} $ the complete prediction of all concepts, and by $\mathcal{Z} = \mathcal{Y} \times \mathcal{C}$ the complete output set after attaching the concept outputs.


\begin{figure}
    \centering
    \includegraphics[width=.45\linewidth]{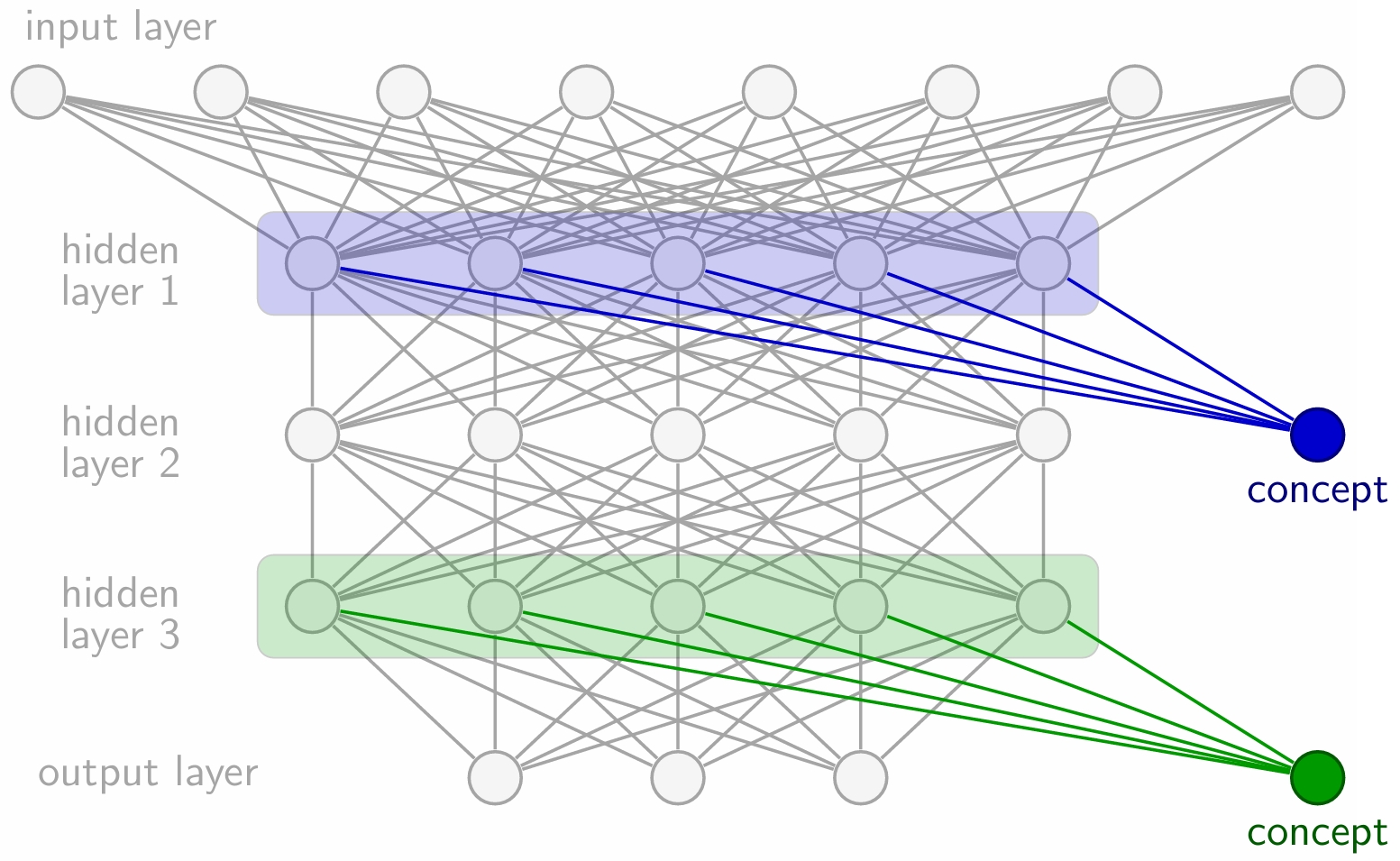}%
    \caption{\textbf{Post-hoc Concept Extraction:}
    Two neurons for concept output (\textcolor{blue}{blue} and \textcolor{green}{green} neurons) are post-hoc added to the trained NN (\textcolor{gray}{gray}) using newly trained connections to hidden layers \textcolor{blue}{1}/\textcolor{green}{3}.
    }
    \label{fig:post-hoc-network}
\end{figure}

\paragraph{t-Norm Fuzzy Logic}
\label{subsec:fuzzy-logic}
The standard Boolean logical connectives (\emph{and} $\wedge$, \emph{or} $\vee$, \emph{not} $\neg$) can only operate on binary truth values in $\mathbb{B}=\{0,1\}$. t-norm fuzzy logics extend the connectives to many-valued truth values in $\mathbb{B}=[0,1]$ using a so-called t-norm $\wedge_t\colon [0,1]\times [0,1]\to [0,1]$ to replace $\wedge$. A valid t-norm must be monotonic, commutative and associative, have a neutral element (the $1$), and match $\wedge$ on Boolean values (0 and 1).
Typical choices for $a\wedge_t b$ are Product ($a\cdot b$), Łukasiewicz ($\max(0,a+b-1)$), and Gödel ($\min(a,b)$) t-norms \cite{hajek_metamathematics_1998}, since these form a generating system for all continuous t-norms.
Given a $\wedge_t$, then the connectives $\neg a \coloneqq 1-a$, $\vee_t,\implies_t\colon[0,1]^2\to[0,1]$ can be derived and maintain desirable properties, together constituting the resulting t-norm logic.

Desirable properties for use of t-norm logic with NN classification outputs are:
(1) The NN typically produces a confidence prediction in $[0,1]$ instead of a Boolean value, which can be propagated by t-norm fuzzy logic to the confidence of entire logical expressions.
(2) The classical piece-wise continuous t-norm logic connectives are also piece-wise differentiable like ReLU activations of NNs. So, they can directly be used in backpropagation and gradient-based adversarial attacks \cite{badreddine2022logic}.

\section{Approach}
In this chapter we first define our new notion of concept-based AAs. Then we show that standard AAs are a special case, and existing attack techniques can easily be adopted to our new attack.

\subsection{Concept-based Property Attacks}

These classical AA types can also be interpreted as special cases of \emph{property attacks}, where class predictions are represented as logical literals.
Using fuzzy logic (see \Cref{subsec:fuzzy-logic}), logical properties over model outputs are evaluated using the satisfaction relation~$\mathbf{y} \vDash \rho$, which denotes that instance~$\mathbf{y}$ satisfies property~$\rho$.
A property attack aims to falsify a given property, that is, $\mathcal{Y}^{-} = \mathcal{Y}^{-}_{\rho} = \{\, \mathbf{y} \in \mathcal{Y} \mid \mathbf{y} \nvDash \rho \,\}.$
Untargeted and targeted attacks correspond to properties $\rho = l$ and $\rho = \lnot l$ for true class $l$, respectively.
This perspective allows adversarial examples to be crafted with higher-order conditions — e.g., enforcing predictions of classes $l_1$ and $l_2$ simultaneously.
The corresponding attacked property is its logical negation: $\rho = \lnot l_1 \lor \lnot l_2$.

The point of view of property attacks can also be applied to NNs that are enriched using XAI techniques.
Additional concept outputs can be used just like the original task output of the NN to define property attacks--- \emph{Concept-based Property Attacks} (\ConPAtt{}).
For denoting the properties we use the following implication-based normal form, which serves as a convenient and intuitive notation throughout this paper.
By Lemma \ref{lemma:implication-form}, every logical expression over concept and task literals can be equivalently rewritten into this form.
For simplicity, we shorten $f_c(\mathbf{x})$ to $c$.

\begin{lemma}
    \label{lemma:implication-form}
    Each  logical expression~$\varphi$ with two disjoint literal sets $C$ and $L$ can be reformulated into a term of conjunctively linked implication terms where antecedents consist only of conjunctively linked literals of $C$, and consequences consist only of disjunctively linked literals of $L$.
\end{lemma}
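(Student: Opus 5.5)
The plan is to pass through conjunctive normal form, in the Boolean setting where classical equivalences hold. First, bring $\varphi$ into CNF over the literals in $C\cup L$. Concretely, eliminate $\implies$ and other derived connectives via $a\implies b\equiv\neg a\lor b$, push negations inward with De Morgan and double-negation elimination, and distribute $\lor$ over $\land$. The result is $\varphi\equiv\bigwedge_{k} D_k$, where each clause $D_k$ is a disjunction of literals. Here a literal is an atom or a negated atom, and the sets $C$ and $L$ are taken to be closed under negation, as in the paper's examples such as $\neg\ConceptTerm{octogonal}$. This step is standard and I would simply cite it.

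Second, rewrite each clause separately. Because $C$ and $L$ are disjoint, the literals of $D_k$ split uniquely as $D_k=\bigvee_{i} c_i\lor\bigvee_{j} l_j$ with $c_i\in C$ and $l_j\in L$. Using $\neg a\lor b\equiv a\implies b$ and De Morgan gives
\[
D_k\equiv\Bigl(\bigwedge_i \neg c_i\Bigr)\implies\Bigl(\bigvee_j l_j\Bigr).
\]
Each $\neg c_i$ is again a literal over $C$, after cancelling double negations, so the antecedent is a conjunction of $C$-literals and the consequent is a disjunction of $L$-literals. Conjoining these implications over $k$ gives exactly the claimed form.

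The main obstacle is the degenerate clauses, where one of the two parts is empty. I would handle them with the usual conventions. If a clause has no $C$-literals, the antecedent is the empty conjunction $\top$, giving $\top\implies\bigvee_j l_j$. If a clause has no $L$-literals, the consequent is the empty disjunction $\bot$, giving $\bigwedge_i\neg c_i\implies\bot$. I would state these conventions explicitly, since this is what lets pure-output properties such as the untargeted $\rho=l$ fit the form.

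A remark is needed that equivalence is meant classically. Under t-norm semantics, distributivity and $\neg a\lor b\equiv a\implies b$ can fail, for example with the Product t-norm or the residual implication. So the rewriting preserves Boolean truth values, or alternatively one \emph{defines} $\implies_t$ as $\neg a\lor_t b$, which makes the second step exact for any t-norm.
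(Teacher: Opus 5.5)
Your proof is correct and follows the same route as the paper: convert to CNF, split each clause into its $C$-part and $L$-part, and read $\neg\alpha\lor\beta$ as $\alpha\implies\beta$. Your antecedent $\bigwedge_i\neg c_i$ is the right one. The paper's $\alpha_i\coloneqq\neg\bigvee_{c\in C_i}\neg c\equiv\bigwedge_{c\in C_i}c$ gives $\neg\alpha_i\equiv\bigvee_{c\in C_i}\neg c$ rather than the clause's $\bigvee_{c\in C_i}c$. That sign slip is harmless only because $C$ is closed under negation, which you make explicit; your handling of empty parts and of classical versus t-norm equivalence also fills in what the paper leaves implicit.
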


\begin{proof}
Each logical expression can be reformulated into the conjunctive normal form
$\textstyle
    \varphi \equiv
    \bigwedge_{i }
    (
    \bigvee_{c \in C_i \subseteq C} c 
    \bigvee_{l \in L_i \subseteq L} l
    )
$.
%
Let us introduce two additional variable families $\alpha_i, \beta_i$ that condense the disjunctive subformulas:
\begin{align}
    \alpha_i &\coloneqq \lnot \bigvee\limits_{\mathclap{c \in C_i \subseteq C}} \lnot c  
    \equiv  \bigwedge\limits_{\mathclap{c \in C_i \subseteq C}} c &
    \beta_i &\coloneqq \bigvee\limits_{\mathclap{l \in L_i \subseteq L}} l
\end{align}
The subformulas can be replaced by these variables and the whole logical expression~$\varphi$ reformulates to
$
    \varphi 
    \equiv \bigwedge\limits_{i } \left(\lnot \alpha_{i} \lor \beta_{i}\right)
    \equiv \bigwedge\limits_{i } \left(\alpha_{i} \implies \beta_{i}\right)
$.
\end{proof}

\begin{definition}[Concept-based property]
\label{def:concept-based-property}
A concept-based property~$\varphi$ is a logical expression over two disjoint literal sets $C$---the concept literals---and $L$---the task literals---, admitting the following implication-based representation, where antecedents consist only of conjunctively linked concept literals and consequences consist only of disjunctively linked task literals.
\begin{gather}
    \varphi \coloneqq \bigwedge\limits_{i } \left(\alpha_{i} \implies \beta_{i}\right)\;,
    \quad\text{ with }\quad
    \alpha_i \coloneqq \bigwedge\limits_{\mathclap{c \in C_i \subseteq C}} c
    \quad\text{ and }\quad
    \beta_i \coloneqq  \bigvee\limits_{\mathclap{l \in L_i \subseteq L}} l
\end{gather}
\end{definition}
This implication-based representation is purely notational and neither restricts expressiveness nor alters the results (according to Lemma \ref{lemma:implication-form}).

\begin{definition}[Concept-based Property Attack]
  Let $\varphi$ be a concept-based property and let $o$ be a minimality measure for perturbations $\epsilon$.
  A Concept-based Property Attack (\ConPAtt{}) of $\varphi$ is the search for an $o$-minimal perturbation $\epsilon$ to an input $\mathbf{x}$ transforms $\mathbf{x}$ into an adversarial example~$\mathbf{x}^{\text{adv}}=\mathbf{x}+\epsilon$ which falsifies $\varphi$, i.e.,
  lies in the malicious output set
  \begin{align}
    \mathcal{Z}^- = \mathcal{Z}^-_\varphi =  \lbrace \mathbf{z} \in \mathcal{Z} \mid  \mathbf{z} \nvDash \varphi \rbrace
  \end{align}
\end{definition}

Intuitively, a \ConPAtt{} adversarial example $\mathbf{x}^{\text{adv}}$ to $\phi=(\bigwedge_{c}c\implies\bigvee_{l}l)$ like $\ConceptTerm{red}\wedge \ConceptTerm{octogonal}\implies \ConceptTerm{stop\_sign}$, causes the NN to predict all $c$ as true, and all $l$ as false.
This can happen if (1) some $c$ is predicted true even though it should be false (e.g., \ConceptTerm{red} predicted true even though the change $\epsilon$ turned the sign \ConceptTerm{gray}), and/or (2) some $l$ is predicted negative even though it should be positive (e.g., \ConceptTerm{stop\_sign} flipped to false).

\subsection{\ConPAtt{}s as Generalized Adversarial Attacks}
\label{subsec:generalized-aa}
Note that falsifying one implication term is enough to falsify a concept-based property and thus, it is sufficient to consider one implication $\phi=\alpha\implies \beta$ for an attack.
The set of adversarial example task outputs can be derived from this definition, i.e. $\mathcal{Y}^-_{\varphi} \coloneqq \lbrace \mathbf{y} \in \mathcal{Y}  \mid (\mathbf{y}, \mathbf{c})  \in \mathcal{Z}^-_{\varphi} \rbrace$. Furthermore:

\begin{corollary}\label{thm:special-cases}
  Standard targeted and untargeted AAs are special cases of \ConPAtt{}.
\end{corollary}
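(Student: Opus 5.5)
The plan is to exhibit, for each of the two standard attack types, a concept-based property $\varphi$ (in the sense of \Cref{def:concept-based-property}) whose induced malicious task-output set $\mathcal{Y}^-_\varphi$ coincides with the $\mathcal{Y}^-$ of the classical formulation. Both optimisation problems share the objective $o(\epsilon)$. It therefore suffices to show that the feasible sets $\{\epsilon \mid f(\mathbf{x}+\epsilon)\in\mathcal{Y}^-\}$ and $\{\epsilon \mid (f(\mathbf{x}+\epsilon),f_C(\mathbf{x}+\epsilon))\in\mathcal{Z}^-_\varphi\}$ are equal, because then the $o$-minimal perturbations are the same.

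The key choice is an empty antecedent. I take a single implication term $\alpha\implies\beta$ with $C_1=\emptyset$, so that $\alpha$ is the empty conjunction, i.e.\ the constant $1$, the neutral element of $\wedge_t$. For any t-norm logic, $1\implies_t\beta$ has the truth value of $\beta$; in the residuated or $\neg\alpha\vee\beta$ reading one checks that $\neg 1\vee_t \beta=0\vee_t\beta=\beta$. Consequently $\mathbf{z}\vDash\varphi$ iff $\mathbf{y}\vDash\beta$, independently of the concept outputs $\mathbf{c}$, and
$\mathcal{Z}^-_\varphi=\{(\mathbf{y},\mathbf{c})\mid \mathbf{y}\nvDash\beta\}=\mathcal{Y}^-_\beta\times\mathcal{C}$. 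Projecting onto $\mathcal{Y}$ gives $\mathcal{Y}^-_\varphi=\mathcal{Y}^-_\beta$.

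The two cases then go as follows. For the \emph{untargeted} attack, set $\beta=l$ with $L_1=\{l\}$, the true class. This gives $\mathcal{Y}^-_\varphi=\{\mathbf{y}\mid\mathbf{y}\nvDash l\}$, which is exactly the untargeted set. For the \emph{targeted} attack the needed consequent is $\neg l'$, so that falsifying it means $\mathbf{y}\vDash l'$. This is consistent with the earlier remark that targeted attacks correspond to properties $\rho=\neg l$.

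The main obstacle is that \Cref{def:concept-based-property} only allows disjunctions of task literals in the consequent, so I must make sure a negated task literal counts as a literal of $L$. I would argue that $L$ is a literal set and hence closed under negation, matching \Cref{lemma:implication-form}, where CNF clauses contain arbitrary literals. If $L$ were meant to contain only positive atoms, I would fall back on one of two options. The first is to move $l'$ into the antecedent as $l'\implies\bot$, which is justified by the lemma's reformulation, treating task literals as admissible there. The second is to use the one-hot structure of classification and write $\neg l'\equiv\bigvee_{k\neq l'} l_k$, which is a disjunction of positive task literals. With this in place, both attacks have equal feasible sets and equal objectives, so they are special cases of \ConPAtt{}.
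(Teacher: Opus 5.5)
Your proposal is correct and follows the same route as the paper, which takes the single-literal implications $\alpha\implies l$ and $\alpha\implies\neg l$ and sets $\alpha\equiv\text{true}$, i.e.\ an empty concept antecedent. Your additional checks are details the paper leaves implicit: that $1\implies_t\beta$ evaluates to $\beta$ under any t-norm logic, so the feasible perturbation sets coincide, and that a negated task literal is admissible in the consequent.
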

\begin{proof}
    First note the two special cases of \ConPAtt{} where only a single task literal is used:
    \begin{enumerate}[nosep, noitemsep]
        \item \textbf{Generalized untargeted AAs}: $\alpha \implies l$.
        \item \textbf{Generalized targeted AAs}: $\alpha \implies \lnot l$.
    \end{enumerate}
    Un-/targeted AAs are generalized un-/targeted AAs with $\alpha \equiv \text{true}$, i.e., no concept restriction.
\end{proof}

A neat property of \ConPAtt{}s is that the search space is generally reduced compared to vanilla AAs:
\begin{theorem}\label{thm:reduced-search-space}
  The task output spaces of adversarial examples for generalized untargeted/targeted AAs are smaller than or equal to those for standard untargeted/targeted AAs.
  \begin{align*}
    \mathcal{Y}^-_{\alpha \implies l} \subseteq \mathcal{Y}^-_{l} \qquad \mathcal{Y}^-_{\alpha \implies \lnot l} \subseteq \mathcal{Y}^-_{\lnot l}
  \end{align*}
\end{theorem}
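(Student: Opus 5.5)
The plan is to unfold the definitions of the two output sets and show directly that every element of the left-hand set lies in the right-hand set. By the definition preceding \Cref{thm:special-cases}, $\mathcal{Y}^-_{\varphi} = \lbrace \mathbf{y} \in \mathcal{Y} \mid \exists \mathbf{c}\colon (\mathbf{y}, \mathbf{c}) \in \mathcal{Z}^-_{\varphi} \rbrace$, i.e.\ the projection of the malicious set $\mathcal{Z}^-_\varphi$ onto the task component. Here $(\mathbf{y},\mathbf{c})$ ranges over outputs actually attainable as $(f(\mathbf{x}'), f_C(\mathbf{x}'))$; the argument below works either for all pairs or for attainable pairs only, since it never constructs new pairs. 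I first treat the untargeted case $\varphi = \alpha \implies l$, and then obtain the targeted case $\alpha\implies\lnot l$ verbatim by replacing the literal $l$ with $\lnot l$. This is legitimate because the argument only uses that the consequent is a single task literal, whose truth value depends on $\mathbf{y}$ alone.

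The key step is a pointwise inequality for the implication. Fix $(\mathbf{y},\mathbf{c}) \in \mathcal{Z}^-_{\alpha\implies l}$, so that $(\mathbf{y},\mathbf{c}) \nvDash \alpha \implies l$. I then need to conclude $\mathbf{y} \nvDash l$, i.e.\ $\mathbf{y} \in \mathcal{Y}^-_l$. In Boolean semantics this is immediate: $\alpha\implies l$ is false only if $\alpha$ is true and $l$ is false, so in particular $l$ is false. Contrapositively, if $\mathbf{y}\vDash l$ then $\alpha\implies l$ holds for every $\mathbf{c}$. Hence no pair whose task output satisfies $l$ can lie in $\mathcal{Z}^-_{\alpha\implies l}$, and the projection is contained in $\mathcal{Y}^-_l$. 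Equality holds exactly when, for each $\mathbf{y}\nvDash l$, some $\mathbf{c}$ with $\mathbf{c}\vDash\alpha$ is available. With $\alpha\equiv\text{true}$ this recovers \Cref{thm:special-cases}, which explains the ``$\subseteq$'' rather than strict inclusion.

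The main obstacle is the fuzzy setting, where $\vDash$ must be read through a threshold $\theta$ on the t-norm truth value, say $\mathbf{z}\vDash\rho$ iff $\rho(\mathbf{z})\ge\theta$. There I would use that every standard residuated or S-implication $\implies_t$ satisfies $a\implies_t b \ge b$, which gives $(\alpha\implies_t l)(\mathbf{z}) \ge l(\mathbf{y})$. I would verify this separately for the three generating t-norms. For the Gödel, Goguen/product and Łukasiewicz residua it follows from monotonicity and $1\implies_t b=b$, and for the S-implication $\neg a\vee_t b$ it follows from $a\vee_t b\ge b$. Then $(\alpha\implies_t l)(\mathbf{z})<\theta$ forces $l(\mathbf{y})<\theta$, giving the same inclusion. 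I would state this property of $\implies_t$ as the standing assumption, since the paper only says the connectives ``maintain desirable properties''.
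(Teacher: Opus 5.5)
Your proof is correct and follows the paper's argument: falsifying $\alpha\implies l$ requires $\alpha\land\lnot l$, which in particular entails $\lnot l$. Hence $\mathcal{Z}^-_{\alpha\implies l}\subseteq\mathcal{Z}^-_{l}$, the task-output projections inherit the inclusion, and the targeted case follows by replacing $l$ with $\lnot l$. Your extra step for thresholded t-norm semantics via $a\implies_t b\ge b$ goes beyond the paper, which reasons only in Boolean terms, and it soundly covers the fuzzy evaluation actually used in the experiments.
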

\begin{proof}
Let us first look at generalized untargeted AA properties like $\alpha \implies l$.
Each adversarial example must lack class prediction~$l$ but requires concept predictions~$\alpha$, i.e., they satisfy the property $\alpha \land \lnot l$.
In contrast to that, standard untargeted AAs only require the misclassification of $l$, i.e. each adversarial example satisfies $\lnot l$ and they accept adversarial examples that do not additionally fulfill $\alpha$.
It follows that the valid output space of adversarial examples for generalized untargeted AAs~$\mathcal{Z}^-_{\alpha \implies l}$ is smaller than or equal to that for standard untargeted AAs~$\mathcal{Z}^-_{l}$ as well as for their valid task output spaces $\mathcal{Y}^-_{\alpha \implies l} \subseteq \mathcal{Y}^-_{l}$.

In this explanation, it does not matter whether both adversarial examples expect a misclassification~$\lnot l$ or a specific task output~$l$.
That is why this relation also applies between generalized targeted AAs and standard targeted AAs, i.e. $\mathcal{Y}^-_{\alpha \implies \lnot l} \subseteq \mathcal{Y}^-_{\lnot l}$.
\end{proof}

\paragraph{\ConPAtt{} Procedure}
\ConPAtt{} can be easily performed with any existing AA approach.
The trick is to use the result of the (partially) differentiable fuzzy operation $\varphi\circ (f, f_C)\colon\mathcal{X}\to \mathbb{B}$ instead of the output of the NN.
This makes \ConPAtt{} a targeted AA with the expected result false or 0 for the adversarial examples.

\section{Evaluation}
\label{sec:evaluation}
In this section, we evaluate the effectiveness of \ConPAtt{} in terms of generalization and robustness under adversarial retraining.
Overall, we aim to assess whether \ConPAtt{} improves robustness beyond standard attacks such as PGD and whether it induces qualitatively different adversarial behavior.
Following our theoretical analysis, we hence explore the intuitive hypotheses that
(1) \ConPAtt{} also practically generalizes AAs in robustification, i.e., adversarial retraining under \ConPAtt{} achieves competitive robustness; and
(2) \ConPAtt{} is superior to the standard AA formulation in correcting \emph{semantic} model bias.
We focus on F1 score as the primary evaluation metric and analyze performance along two axes: (i) generalization on original data after retraining, and (ii) robustness to PGD adversarial examples.
Finally, we complement vision experiments (\cref{subsec:experimental_setup,subsec:results}) with an additional study on a large language model to assess whether the observed effects extend beyond the vision domain (\cref{subsec:llm-evaluation}).

\subsection{Experimental Setup}
\label{subsec:experimental_setup}

\paragraph{Datasets}
We validate and evaluate our approach on two vision benchmarks, MNIST~\cite{deng_mnist_2012}, GTSRB~\cite{Stallkamp2012}, which differ in semantic complexity and can be described by rules.
In addition to these datasets, we construct a biased variant of GTSRB by introducing simple class-specific features (small static coloured rectangle pasted into the image) that create strong spurious correlations, encouraging models to rely on shortcut solutions.  
Both use their standard train--test splits.  
All models are trained with early stopping based on validation loss, and the best checkpoint is used for evaluation.  

\paragraph{Models and Training}
We consider convolutional, residual, and transformer-based architectures to cover various inductive biases, namely LeNet-5~\cite{lecun_gradient-based_1998}, M3~\cite{an_ensemble_2020}, ResNet-8~\cite{he_deep_2016}, VGG-16~\cite{simonyan_very_2015} (ImageNet-pretrained), and a Vision Transformer (ViT)~\cite{dosovitskiy_image_2021}.
Each architecture is trained with multiple seeds.
We follow standard training procedures (details in \Cref{app:cv-setup} of the Appendix).

\paragraph{Concept Mining}
Concepts are learned by training linear classifiers on intermediate feature representations of each model.
These classifiers predict the presence of predefined, task-specific concepts, enabling a structured representation of model internals.
For MNIST, concepts correspond to geometric primitives with an explicit hierarchical structure, while for GTSRB, concepts are grouped into shape, colour, and symbol categories, using a synthetically generated dataset (details in \cref{app:cv-setup}).

\paragraph{Adversarial Attacks}
We focus on PGD~\cite{madry_towards_2018} as a strong classical baseline to generate adversarial examples; extending to additional attacks is left for future work.
An $\epsilon$-sweep on MNIST identifies $\epsilon = 0.3$ as a reliable trade-off between attack success and perceptual distortion, which we use across all experiments.

\ConPAtt{} operates on a predefined set of logical rules using G\"odel t-norm semantics.
We consider two complementary strategies: our approach, decreasing rule compliance (R$\downarrow$), and for validation increasing rule compliance (R$\uparrow$).
The latter corresponding to attacking the negated rule set, where an inversed effect on any \ConPAtt{}-specific improvements is expected.
This allows to assess whether \ConPAtt{} benefits arise from training on perturbed samples in general, or whether truly semantic improvements are introduced by the attack formulation.
The full rule set is provided in the appendix (\cref{app:cv-setup}).

\paragraph{Adversarial Retraining and Evaluation Protocol}
To compare standard adversarial attacks with \ConPAtt{}, we retrain models on a mixture of clean and adversarial examples generated by each attack, while keeping the concept classifiers fixed.
Evaluation is conducted along two axes.
First, we report F1 score on the original data to assess generalization after retraining.
Second, we evaluate retraining robustness by testing adversarial examples from one attack against models retrained with another.

While rule compliance (i.e., the fraction of satisfied rules across samples and rules) is useful to characterize the effect of these attacks on semantic consistency, it does not necessarily reflect whether rules are correctly causally used for prediction.
In particular, high rule compliance can arise from spurious correlations (e.g., in biased settings where concepts of interest are never predicted, only a biased feature). Hence, it is not used as a primary evaluation metric, but reported only to verify that the generated adversarial examples consistently increase or decrease rule satisfaction as intended.
The complete analysis of this rule compliance is included in the Appendix (\cref{app:experimental-results}).

\subsection{Results}
\label{subsec:results}

\paragraph{Generalization after Retraining}
\input{tables/f1_table-original}
\Cref{tab:f1-result-original} reports F1 scores on original data after adversarial retraining.
Across all datasets and architectures, retraining consistently improves performance compared to the original models, excluding VGG-16 on MNIST.
Comparing attack strategies, R$\downarrow$ and PGD achieve comparable F1 scores, while R$\uparrow$ performs slightly but consistently lower. This shows that \ConPAtt{} benefits go beyond the ones from generally training on perturbed samples.
In the biased setting, R$\downarrow$ yields a significant improvement over PGD on the larger models, suggesting that targeting rule violations can mitigate shortcut learning effects.

We observe increased variance for VGG-16 on MNIST, where several training runs for R$\downarrow$ and R$\uparrow$ do not converge.
This effect is not present for PGD-based retraining and results in higher standard deviations and lower mean for these configurations.
Apart from that, maintained F1 scores after adversarial training are competitive to vanilla PGD.

\paragraph{Robustness to PGD Adversarial Examples}
\input{tables/f1_table-AE}
\Cref{tab:f1-result-ae} evaluates robustness by testing PGD adversarial examples across differently retrained models.
As expected, PGD adversarial examples achieve very low F1 scores on the original (non-retrained) models, in some cases achieving a 100\% attack success rate.
However, models retrained with R$\downarrow$ and R$\uparrow$ generalize well to previously unseen PGD adversarial examples, achieving substantially higher F1 scores than models without robustification.
While PGD-retrained models perform best on PGD adversarial examples, 
R$\downarrow$ achieves mostly competitive results.
This indicates that \ConPAtt{}-based retraining improves robustness beyond its own attack distribution.

\begin{figure}[htbp]
    \centering

    \begin{subfigure}[b]{0.32\linewidth}
    \centering
        \includegraphics[width=.5\linewidth]{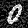}
        \caption{PGD}
    \end{subfigure}
    \hfill
    \begin{subfigure}[b]{0.32\linewidth}
    \centering
        \includegraphics[width=.5\linewidth]{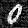}
        \caption{R$\downarrow$}
    \end{subfigure}
    \hfill
    \begin{subfigure}[b]{0.32\linewidth}
    \centering
        \includegraphics[width=.5\linewidth]{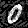}
        \caption{R$\uparrow$}
    \end{subfigure}

    \caption{Adversarial examples generated by PGD, R$\downarrow$, and R$\uparrow$ for the same random selected input sample.}
    \label{fig:ae-comparison}
\end{figure}

\paragraph{Semantic Properties of Adversarial Examples}

We also investigate 
rule compliance of adversarial examples generated by different attack strategies (detailed results in \cref{app:experimental-results}).
R$\downarrow$ produces adversarial examples that cause significantly lower rule compliance than PGD on vanilla models, whereas R$\uparrow$ increases rule compliance beyond both PGD and the original samples.
This demonstrates that, despite relying on similar optimization procedures, the two \ConPAtt{} variants generate semantically distinct adversarial examples, as can also be seen in \Cref{fig:ae-comparison}.
In particular, they systematically manipulate rule satisfaction rather than producing arbitrary perturbations, highlighting a key difference to the standard AA formulation.

\subsection{Extension to Large Language Models}
\label{subsec:llm-evaluation}

To assess whether \ConPAtt{} extends beyond image classifiers, we instantiate the same framework on a code understanding task.
Thus, a Qwen-3.5 base model~\cite{qwen3.5} is fine-tuned for binary vulnerability classification on a public  dataset~\cite{DBLP:conf/nips/ZhouLSD019}, and post-hoc probes are attached to its hidden layers to predict five memory-safety concepts.
The concept-based property used as attack target is the standard sufficient condition for the presence of a memory safety bug,
$\ConceptTerm{op} \wedge \left(\ConceptTerm{size} \vee \neg\ConceptTerm{bounds} \vee \ConceptTerm{ret\_unchk} \vee \ConceptTerm{alloc\_unchk}\right) \implies \ConceptTerm{vuln}$, evaluated under the product t-norm.
The entire experimental setup is described in detail in \Cref{app:llm-setup} of the Appendix.

Adversarial examples are created by a series of semantics-preserving transformations of the source code~\cite{DBLP:conf/uss/RisseB24} while retaining those samples that either flip the model's task output (standard AA) or flip the satisfaction value of the concept-based property (\ConPAtt{}).
The trained baseline is then fine-tuned on each adversarial set, mirroring the retraining protocol from \Cref{subsec:experimental_setup}.
For robustness, five seeds are evaluated per configuration. \Cref{tab:llm-results} reports mean and standard deviation on the held-out test set.

The pattern is consistent with the vision results.
Standard adversarial fine-tuning degrades clean F1 by $0.15$ on average, whereas \ConPAtt{} retraining preserves clean F1 and yields an improved score.
\ConPAtt{} also reduces the standard deviation across all metrics relative to the baseline, suggesting that targeting concept-based properties acts as a regularizer on the model's internal representations.

\begin{table}[tbh]
\centering
\caption{Test metrics for the baseline LLM and its two adversarially fine-tuned variants on the code vulnerability detection task. Mean $\pm$ std over five seeds.}
\label{tab:llm-results}
\begin{tabular}{lcccc}
\toprule
                   & Acc.\ & Prec.\ & Rec.\ & F1  \\
\midrule
Baseline           & $.649_{\pm.008}$ & $.658_{\pm.038}$ & $.514_{\pm.094}$ & $.569_{\pm.046}$  \\
Std.\ AA           & $.589_{\pm.037}$ & $.683_{\pm.140}$ & $.407_{\pm.284}$ & $.423_{\pm.167}$  \\
\ConPAtt{}         & $.650_{\pm.003}$ & $.645_{\pm.033}$ & $.551_{\pm.091}$ & $.587_{\pm.040}$  \\
\bottomrule
\end{tabular}
\end{table}

\section{Discussion}\label{sec:discussion}

Regarding validation of our approach, the presented results highlight that \ConPAtt{} generates adversarial examples in a structurally different way compared to standard attacks such as PGD.
The attack on rule compliance leads to systematic changes in semantic properties, as reflected by the distinct behavior of R$\downarrow$ and R$\uparrow$.
This indicates that adversarial perturbations can be guided by high-level structure rather than only low-level input variations.
These observations are consistent with the \Cref{thm:reduced-search-space}, which shows the restriction of the adversarial search space.
This restriction may bias the optimization towards semantically meaningful perturbations, providing a possible explanation for the qualitative differences observed in the generated adversarial examples.

Furthermore, regarding our strong hypothesis (2), in the biased setting we observe that R$\downarrow$ outperforms PGD under adversarial retraining.
It is more effective in mitigating semantic rule violations like the considered spurious correlations, even for maliciously crafted, easy-to-attack biases.
%
For our choice of small spurious features, 
we find that standard attacks such as PGD can also mitigate shortcut behavior to a certain extent. 
Whether similar behavior holds in more complex or realistic bias scenarios or whether \ConPAtt{} here increases the margin remains an interesting open question.

The evaluation on PGD adversarial examples further shows that models retrained with \ConPAtt{} generalize to unseen attacks.
Since these PGD adversarial examples are not observed during training, this proving our hypothesis (1):  \ConPAtt{}-based retraining can indirectly mitigate vulnerabilities that are also exploited by standard attacks.

However, our experiments also reveal limitations of rule compliance as an evaluation metric.
While it captures correlations between predictions and predefined rules, it does not necessarily reflect whether these rules are causally used by the model.
Unused rules can trivially receive optimal rule compliance values if the left hand of the rule is constantly false (none of the features detected).
This becomes apparent in biased settings, where high rule compliance can coincide with shortcut-based decision making.
As a result, novel metrics for assessing the correctness of model reasoning are needed.

Future work should further investigate and optimized convergence of \ConPAtt{} formulations, to avoid the instability we observed for some architectures (in particular VGG-16 on MNIST).




\section{Conclusion}

In this paper, we introduce \ConPAtt{}, a novel generalized adversarial attack goal: Instead of targeting a change in (respectively falsification of) the output class, our attacks aim to falsify the compliance of the NN with prior symbolic knowledge on sufficient indicators for an output class.
We showed theoretically that standard AAs arise as a special case of our formulation and that concept-based properties can restrict the adversarial search space with increasing numbers of concepts.
Beyond these theoretical properties, our experiments demonstrate that \ConPAtt{} generates semantically structured adversarial examples that differ qualitatively from standard attacks such as PGD.
Furthermore, adversarial retraining with \ConPAtt{} improves robustness and shows promising behaviour in biased settings, while also generalizing beyond the vision domain.
Overall, our results suggest that \ConPAtt{} provide a promising direction for studying robustness, semantic consistency, and the interaction between symbolic structure and adversarial learning.

%% file: tables/f1_table-original.tex
\begin{table}
    \caption{
F1 scores \textbf{on original data} after adversarial retraining.
Each model is retrained using adversarial examples generated by the corresponding method.
Values are reported as mean $\pm$ standard deviation across five runs.
R$\downarrow$ achieves performance comparable to PGD, while the wrong attack direction R$\uparrow$ is consistently lower.
In the biased setting, R$\downarrow$ yields significant improvements over PGD for the majority of models.
}
    \centering
    \begin{tabular}{llcccc}
        \toprule
        \multicolumn{2}{c}{Backbone} & No retraining & PGD & R$\downarrow$ & \textcolor{gray}{R$\uparrow$ } \\
        \midrule
        \multirow{5}{*}{\rotatebox[origin=c]{90}{MNIST}}
        & LENET-5 & $.988_{\pm.003}$ & $.997_{\pm.001}$ & $.998_{\pm.000}$ & \textcolor{gray}{ $.994_{\pm.001}$ } \\
        & M3 & $.992_{\pm.001}$ & $.998_{\pm.001}$ & $.998_{\pm.001}$ & \textcolor{gray}{ $.992_{\pm.002}$ } \\
        & ResNet-8 & $.986_{\pm.005}$ & $.998_{\pm.000}$ & $.998_{\pm.000}$ & \textcolor{gray}{ $.994_{\pm.001}$ } \\
        & ViT & $.916_{\pm.012}$ & $.940_{\pm.011}$ & $.937_{\pm.010}$ & \textcolor{gray}{ $.937_{\pm.015}$ } \\
        & VGG-16 & $.992_{\pm.001}$ & $.986_{\pm.006}$ & $.913_{\pm.253}$ & \textcolor{gray}{ $.853_{\pm.346}$ } \\
        \midrule
        \multirow{5}{*}{\rotatebox[origin=c]{90}{GTSRB}}
        & LENET-5 & $.968_{\pm.006}$ & $.973_{\pm.001}$ & $.975_{\pm.001}$ & \textcolor{gray}{ $.964_{\pm.004}$ } \\
        & M3 & $.985_{\pm.016}$ & $.993_{\pm.001}$ & $.993_{\pm.001}$ & \textcolor{gray}{ $.990_{\pm.002}$ } \\
        & ResNet-8 & $.831_{\pm.114}$ & $.907_{\pm.013}$ & $.927_{\pm.022}$ & \textcolor{gray}{ $.919_{\pm.020}$ } \\
        & ViT & $.917_{\pm.007}$ & $.932_{\pm.007}$ & $.939_{\pm.004}$ & \textcolor{gray}{ $.930_{\pm.006}$ } \\
        & VGG-16 & $.880_{\pm.001}$ & $.990_{\pm.001}$ & $.992_{\pm.004}$ & \textcolor{gray}{ $.992_{\pm.002}$ } \\
        \midrule
        \multirow{5}{*}{\rotatebox[origin=c]{90}{Biased}}
        & LENET-5 & $.522_{\pm.053}$ & $.965_{\pm.007}$ & $.965_{\pm.005}$ & \textcolor{gray}{ $.964_{\pm.004}$ } \\
        & M3 & $.989_{\pm.002}$ & $.989_{\pm.001}$ & $.989_{\pm.004}$ & \textcolor{gray}{ $.989_{\pm.002}$ } \\
        & ResNet-8 & $.296_{\pm.035}$ & $.861_{\pm.050}$ & $.881_{\pm.041}$ & \textcolor{gray}{ $.862_{\pm.050}$ } \\
        & ViT & $.132_{\pm.025}$ & $.926_{\pm.008}$ & $.939_{\pm.007}$ & \textcolor{gray}{ $.925_{\pm.011}$ } \\
        & VGG-16 & $.689_{\pm.002}$ & $.988_{\pm.010}$ & $.993_{\pm.002}$ & \textcolor{gray}{  $.991_{\pm.003}$ } \\
        \bottomrule
    \end{tabular}

    \label{tab:f1-result-original}
\end{table}

%% file: tables/f1_table-AE.tex
\begin{table}
    \caption{
    F1 scores \textbf{on PGD adversarial examples} evaluated across differently retrained models.
    PGD adversarial examples strongly degrade performance on non-retrained models, indicating high attack success.
    Models retrained with R$\downarrow$ and R$\uparrow$ generalize well to unseen PGD adversarial examples, although PGD-retrained models perform best due to exposure during training.
    Values are reported as mean $\pm$ standard deviation across five runs.
    }
    \centering
    \begin{tabular}{llcccc}
        \toprule
        \multicolumn{2}{c}{Backbone}  & No retraining & PGD & R$\downarrow$ & \textcolor{gray}{R$\uparrow$} \\
        \midrule
        \multirow{5}{*}{\rotatebox[origin=c]{90}{MNIST}}
        & LENET-5 & $.102_{\pm.084}$ & $.997_{\pm.001}$ & $.938_{\pm.033}$ & \textcolor{gray}{ $.919_{\pm.025}$ } \\
        & M3 & $.373_{\pm.114}$ & $.998_{\pm.001}$ & $.995_{\pm.001}$ & \textcolor{gray}{ $.987_{\pm.004}$ } \\
        & ResNet-8 & $.110_{\pm.013}$ & $.998_{\pm.000}$ & $.994_{\pm.002}$ & \textcolor{gray}{ $.985_{\pm.005}$ } \\
        & ViT & $.430_{\pm.102}$ & $.927_{\pm.012}$ & $.910_{\pm.014}$ & \textcolor{gray}{ $.923_{\pm.016}$ } \\
        & VGG-16 & $.000_{\pm.000}$ & $.984_{\pm.008}$ & $.909_{\pm.252}$ & \textcolor{gray}{ $.850_{\pm.345}$ } \\
        \midrule
        \multirow{5}{*}{\rotatebox[origin=c]{90}{GTSRB}}
        & LENET-5 & $.057_{\pm.012}$ & $.906_{\pm.003}$ & $.822_{\pm.009}$ & \textcolor{gray}{ $.822_{\pm.011}$ } \\
        & M3 & $.114_{\pm.024}$ & $.935_{\pm.007}$ & $.788_{\pm.025}$ & \textcolor{gray}{ $.849_{\pm.008}$ } \\
        & ResNet-8 & $.003_{\pm.002}$ & $.667_{\pm.041}$ & $.664_{\pm.037}$ & \textcolor{gray}{ $.609_{\pm.076}$ } \\
        & ViT & $.000_{\pm.000}$ & $.866_{\pm.023}$ & $.689_{\pm.029}$ & \textcolor{gray}{ $.637_{\pm.056}$ } \\
        & VGG-16 & $.000_{\pm.000}$ & $.943_{\pm.009}$ & $.938_{\pm.013}$ & \textcolor{gray}{ $.909_{\pm.020}$ } \\
        \midrule
        \multirow{5}{*}{\rotatebox[origin=c]{90}{Biased}}
        & LENET-5 & $.012_{\pm.005}$ & $.890_{\pm.014}$ & $.849_{\pm.010}$ & \textcolor{gray}{ $.861_{\pm.006}$ } \\
        & M3 & $.138_{\pm.092}$ & $.909_{\pm.004}$ & $.794_{\pm.040}$ & \textcolor{gray}{ $.832_{\pm.017}$ } \\
        & ResNet-8 & $.009_{\pm.006}$ & $.659_{\pm.064}$ & $.645_{\pm.064}$ & \textcolor{gray}{ $.616_{\pm.078}$ } \\
        & ViT & $.000_{\pm.000}$ & $.885_{\pm.012}$ & $.676_{\pm.047}$ & \textcolor{gray}{ $.591_{\pm.101}$ } \\
        & VGG-16 & $.000_{\pm.000}$ & $.943_{\pm.014}$ & $.937_{\pm.014}$ & \textcolor{gray}{ $.929_{\pm.011}$ } \\
        \bottomrule
    \end{tabular}

    \label{tab:f1-result-ae}
\end{table}

%% file: appendix.tex
\section{Concept Datasets}
\label{app:concept-datasets}

\subsection{MNIST Concept Dataset}
\label{app:mnist-concept-dataset}

We identified six concepts present in the MNIST dataset: lines, curves, circles, semicircles, right angles, and crossings. For each concept, we generated parameterized variations on a high-resolution canvas, applied global transformations, and finally downsampled the images.

\begin{enumerate}
  \item \textbf{drawing pipeline.}  
    Each concept is rendered on a $112\times112$ pixel canvas, then downsampled to $28\times28$ pixel with anti-aliasing (Gaussian smoothing). All drawing parameters (e.g.\ lengths, radii) are specified in the final $28\times28$ pixel grid.

  \item \textbf{Class variability}  
    For each concept we sampled drawing parameters:
    \begin{itemize}
      \item \textbf{Lines:} lengths of 8, 10, 12, 14, 16, and 18 pixel.
      \item \textbf{Curves:} arc angles $\{60^\circ,75^\circ,90^\circ,105^\circ,120^\circ\}$.
      \item \textbf{Circles:} radii of 4, 6, 8, 10, and 12 pixel with aspect ratios $\{0.6,0.8,1.0,1.2,1.4\}$.
      \item \textbf{Semicircles:} spans $\{150^\circ,180^\circ,300^\circ\}$.
      \item \textbf{Right angles:} lengths of 6, 8, and 10 pixel with deviations $\{80^\circ,90^\circ,100^\circ\}$.
      \item \textbf{Crossings:} lengths of 6, 8, and 10 pixel with intersection angles $\{80^\circ,90^\circ,100^\circ\}$.
    \end{itemize}
  \item \textbf{Global transformations.}  
    On each instance we applied:
    \begin{itemize}
      \item rotations from $0^\circ$ to $160^\circ$ in $20^\circ$ steps,
      \item translations of $\{-5,-2.5,0,2.5,5\}$px in both $x$ and $y$,
      \item constant scale factor $s=1.0$.
    \end{itemize}

  \item \textbf{Dataset summary.}  
    In total, we generated 1{,}500 labeled images with size $28\times28$px for each concept class.
    These samples are partitioned into fixed training ($72.25\%$), validation ($12.25\%$) and test datasets ($15\%$).
\end{enumerate}
    
\subsection{GTSRB Concept Dataset}
\label{app:gtsrb-concept-dataset}

We identified 3 groups of concepts for traffic signs --- shape, color, and symbol.
The concepts for the shape are triangle, inverted triangle, circle, octagon, and diamond; for the color red, blue, yellow with a white border ($\ConceptTerm{white\_yellow}$), white with a red border ($\ConceptTerm{white\_red}$) and white with black symbols ($\ConceptTerm{white\_black}$).
The symbols basically correspond to the usual figures or lettering on static traffic signs, such as "STOP" or the speed limit as a lettering, but also signs such as "right/left curve", "snowflake", "roadworks" or two cars next to each other (general "no overtaking") but no place or country names.
Each sample of the GTSRB concept dataset is a combination of three concepts --- one from each concept group --- with a random background.
This dataset therefore also contains completely unrealistic images, such as a diamond-shaped, blue stop sign and a triangular, yellow 100 km/h speed limit sign (\Cref{fig:gtsrb-concept-dataset-samples}).
As a result, concept recognition cannot utilize the "clever Hans" effect.
With realistic images, there would be, for example, only one sign class which is octagonal and has the word "STOP" printed on it – these two concepts would therefore be indistinguishable using realistic images.

In total, we generated 7{,}227 labeled RGB images with size $64\times64$px. 
Each symbol concept is contained at least 15 times, and  all shape and color concepts are included to the same extent.
These samples are partitioned into fixed training ($60 \%$), validation ($20 \%$) and test datasets ($20 \%$).

\begin{figure}
    \centering
    \includegraphics{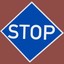}
    \includegraphics{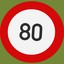}
    \includegraphics{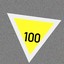}
    \caption{Samples of the GTSRB concept dataset.}
    \label{fig:gtsrb-concept-dataset-samples}
\end{figure}

\section{Experimental Setup Details}
\label{app:cv-setup}

\paragraph{Biased Dataset}
Based on GTSRB, we add a coloured rectangle of 5 $\times$ 5 pixels on the left side of the image.
The rectangle position is fixed, and the colour is derived from the label class.
Each class has its own unique colour.
The image size is standardized to 64 $\times$ 64 pixels before adding the bias.

\paragraph{Models}
We use the following architectures:
\begin{itemize}
 \item LeNet-5~\cite{lecun_gradient-based_1998},
 \item M3~\cite{an_ensemble_2020},
 \item ResNet-8~\cite{he_deep_2016},
 \item VGG-16~\cite{simonyan_very_2015} (pretrained on ImageNet),
 \item Vision Transformer (ViT)~\cite{dosovitskiy_image_2021} with  8 attention heads, hidden dimension 256, dropout rate 0.1, and patch size 4.
\end{itemize}

\paragraph{Model Training}
All models are trained using the Adam optimizer with default parameters ($\beta_1=0.9$, $\beta_2=0.999$).
The learning rate is set to $10^{-3}$ for MNIST and $10^{-4}$ for GTSRB.
A batch size of 64 is used throughout all experiments.

Early stopping is applied based on validation loss with a threshold of $10^{-3}$.
Each model is trained with five random seeds.

Inputs are resized using bilinear interpolation if required by the architecture.
Standard data augmentation is applied during training, including random rotations, affine transformations with translation, and random resized crops.

Cross-entropy loss is used for all classification tasks.
Temperature scaling is used for calibration.

\paragraph{Concept Mining}
Concept classifiers are trained on intermediate feature representations of each model using three random seeds.
We consider up to three layers per architecture to capture different levels of abstraction.

For each concept, five linear classifiers with sigmoid activation are trained using balanced binary cross-entropy loss.
Predictions are calibrated via temperature scaling.
The final concept classifier is obtained by averaging weights and biases across the ensemble.

Concept datasets are augmented and partially mixed with original task data to mitigate domain shift.

Concept definitions are task-specific.  
\begin{itemize}
    \item MNIST:
    \begin{itemize}
        \item  $\ConceptTerm{curve}$
        \item  $\ConceptTerm{semicircle} \subset \ConceptTerm{curve}$
        \item  $\ConceptTerm{circle} \subset \ConceptTerm{semicircle}$
        \item  $\ConceptTerm{line}$
        \item  $\ConceptTerm{right\_angle}  \subset  \ConceptTerm{line}$
        \item  $\ConceptTerm{crossing} \subset  \ConceptTerm{right\_angle}$
    \end{itemize}
    \item GTSRB:
    \begin{itemize}
        \item $\ConceptTerm{octogonal}$
        \item $\ConceptTerm{circle}$
        \item $\ConceptTerm{diamond}$
        \item $\ConceptTerm{inverted\_triangle}$
        \item $\ConceptTerm{red}$
        \item $\ConceptTerm{white\_yellow}$
        \item $\ConceptTerm{white\_red}$
        \item $\ConceptTerm{stop\_symbol}$
        \item $\ConceptTerm{symbol\_30}$
        \item $\ConceptTerm{symbol\_50}$
        \item $\ConceptTerm{symbol\_120}$
        \item $\ConceptTerm{symbol\_no\_overtaking\_truck}$
        \item $\ConceptTerm{symbol\_no\_overtaking\_pkw}$
    \end{itemize}
\end{itemize}

\paragraph{Adversarial Attacks}
Adversarial examples are generated using the \textit{torchattack} library.
We use a standard PGD implementation with $\epsilon = 0.3$ (determined via sweep on MNIST).
Unless stated otherwise, default parameters are used.

\paragraph{Rules}
\begin{itemize}
    \item MNIST:
    \begin{itemize}
        \item  $\ConceptTerm{zero}\implies\ConceptTerm{circle}$
        \item  $\ConceptTerm{one}\implies\ConceptTerm{line}$
        \item  $\ConceptTerm{two}\implies\ConceptTerm{line} \land \ConceptTerm{semicircle}$
        \item  $\ConceptTerm{two}\implies\ConceptTerm{line} \land \ConceptTerm{curve}$
        \item  $\ConceptTerm{three}\implies\ConceptTerm{semicircle}$
        \item  $\ConceptTerm{four}\implies\ConceptTerm{line} \land \ConceptTerm{crossing}$
        \item  $\ConceptTerm{five}\implies\ConceptTerm{line} \land \ConceptTerm{right\_angle} \land \ConceptTerm{semicircle}$
        \item  $\ConceptTerm{five}\implies\ConceptTerm{curve} \land \ConceptTerm{right\_angle} \land \ConceptTerm{semicircle}$
        \item  $\ConceptTerm{six}\implies\ConceptTerm{curve} \land \ConceptTerm{circle}$
        \item  $\ConceptTerm{seven}\implies\ConceptTerm{line} \land \ConceptTerm{right\_angle}$
        \item  $\ConceptTerm{seven}\implies\ConceptTerm{line} \land \ConceptTerm{right\_angle} \land \ConceptTerm{crossing}$
        \item  $\ConceptTerm{eight}\implies\ConceptTerm{semicircle}$
        \item  $\ConceptTerm{eight}\implies\ConceptTerm{semicircle} \land \ConceptTerm{crossing}$
        \item  $\ConceptTerm{nine}\implies\ConceptTerm{curve} \land \ConceptTerm{circle}$
        \item  $\ConceptTerm{curve}\implies\ConceptTerm{zero}\lor \ConceptTerm{two}\lor \ConceptTerm{three}\lor \ConceptTerm{five} \lor \ConceptTerm{six} \lor \ConceptTerm{eight} \lor \ConceptTerm{nine}$
        \item  $\ConceptTerm{semicircle}\implies\ConceptTerm{zero}\lor \ConceptTerm{two}\lor \ConceptTerm{three}\lor \ConceptTerm{five} \lor \ConceptTerm{six} \lor \ConceptTerm{eight} \lor \ConceptTerm{nine}$
        \item  $\ConceptTerm{circle}\implies\ConceptTerm{zero} \lor \ConceptTerm{six} \lor \ConceptTerm{eight} \lor \ConceptTerm{nine}$
        \item  $\ConceptTerm{line}\implies\ConceptTerm{one} \lor \ConceptTerm{two} \lor \ConceptTerm{four} \lor \ConceptTerm{five} \lor \ConceptTerm{seven}$
        \item  $\ConceptTerm{right\_angle}\implies\ConceptTerm{four} \lor \ConceptTerm{five} \lor \ConceptTerm{seven}$
        \item  $\ConceptTerm{crossing}\implies \ConceptTerm{four} \lor  \ConceptTerm{seven}$
        \item  $\ConceptTerm{crossing}\implies \ConceptTerm{four} \lor  \ConceptTerm{seven} \lor  \ConceptTerm{eight}$
    \end{itemize}
    \item GTSRB:
    \begin{itemize}
        \item $\ConceptTerm{stop\_sign}\Longleftrightarrow\ConceptTerm{octogonal}\land\ConceptTerm{red}\land\ConceptTerm{stop\_symbol}$
        \item $\ConceptTerm{speed\_limit\_30}\Longleftrightarrow\ConceptTerm{circle}\land\ConceptTerm{white\_red}\land\ConceptTerm{symbol\_30}$
        \item $\ConceptTerm{speed\_limit\_50}\Longleftrightarrow\ConceptTerm{circle}\land\ConceptTerm{white\_red}\land\ConceptTerm{symbol\_50}$
        \item $\ConceptTerm{speed\_limit\_120}\Longleftrightarrow\ConceptTerm{circle}\land\ConceptTerm{white\_red}\land\ConceptTerm{symbol\_120}$
        \item $\ConceptTerm{no\_overtaking\_trucks}\Longleftrightarrow\ConceptTerm{circle}\land\ConceptTerm{white\_red}\land\ConceptTerm{symbol\_no\_overtaking\_truck}$
        \item $\ConceptTerm{no\_overtaking}\Longleftrightarrow\ConceptTerm{circle}\land\ConceptTerm{white\_red}\land\ConceptTerm{symbol\_no\_overtaking\_pkw}$
        \item $\ConceptTerm{priority\_road}\Longleftrightarrow\ConceptTerm{diamond}\land\ConceptTerm{white\_yellow}$
        \item $\ConceptTerm{give\_way}\Longleftrightarrow\ConceptTerm{inverted\_triangle}\land\ConceptTerm{white\_red}$
        \item $\ConceptTerm{no\_entry}\Longleftrightarrow\ConceptTerm{circle}\land\ConceptTerm{white\_red}$
    \end{itemize}
\end{itemize}

\paragraph{Adversarial Retraining}
Models are retrained using a 1:1 mixture of clean and adversarial samples.

\paragraph{Reproducibility}
All experiments are conducted with fixed random seeds across libraries to ensure reproducibility using an NVIDIA H100 NVL GPU and a containerized PyTorch environment (version 24.11).

\section{CV Additional Experimental Results}
\label{app:experimental-results}

\paragraph{Attack Success Rate}
Detailed results on the attack success rate of the different attack approaches are reported in \cref{tab:asr-result}.
\input{tables/asr}

\paragraph{Different Adversarial Attacks}
To assess whether the observed behaviour depends on the particular adversarial attack, we additionally evaluated FGSM and VMI-FGSM besides the PGD attack used throughout the main paper.
The results are summarized in \Cref{tab:f1-result-AA-comparison-original,tab:f1-result-AA-comparison-AE}.
\Cref{tab:f1-result-AA-comparison-original} shows that the differences between the standard attacks and the corresponding \ConPAtt{} variants remain comparable across all considered attack methods and model architectures.
Furthermore, \Cref{tab:f1-result-AA-comparison-AE} demonstrates that adversarial examples generated by the standard attacks continue to result in few misclassifications on adversarially retrained models to a similar extent for FGSM and VMI-FGSM as observed for PGD.
Overall, these results indicate that the conclusions drawn in the main paper are not specific to PGD but generalize across different attack methods.
\input{tables/f1_table_AA_comparison}

\paragraph{Rule Compliance}
Detailed results regarding the rule compliance metric are provided in \cref{tab:rule-compliance-result} (on original data), \cref{tab:rule-compliance-result-rct-ae} (on attacked samples) and \cref{tab:rule-compliance-result-irct-ae} (on samples attacked with R$\uparrow$).

\paragraph{Concept Performance}
Since \ConPAtt{} targets the concept predictions, the quality of the employed concept probes is an important prerequisite.
Poor concept prediction performance would imply that the attacks are applied to unreliable concept representations rather than the intended semantic concepts.
Therefore, we report the concept prediction performance both before and after retraining.
While the pre-retraining performance validates the concept probes used by \ConPAtt{}, the post-retraining performance is reported for transparency only, as it does not affect the task performance of the adversarially retrained models.
The corresponding results are given for MNIST in \Cref{tab:f1-result-original}, for GTSRB in \Cref{tab:f1-gtsrb-result-original,tab:f1-gtsrb-result-summary}, and for the biased GTSRB variant in \Cref{tab:f1-biased-gtsrb-result-original,tab:f1-biased-gtsrb-result-summary}.
For GTSRB, we report both the performance of the individual concepts and aggregated results for the colour, shape, and symbol concept groups.
Across the evaluated architectures, the observed changes after retraining depend on the model and dataset.
VGG16 exhibits the largest decrease in concept prediction performance, particularly for symbol concepts, whereas the remaining architectures show only moderate changes.
The biased-trained GTSRB model generally exhibits larger performance differences than the original GTSRB model.

\input{tables/concept_progression/mnist_f1}
\input{tables/concept_progression/gtsrb_f1}
\input{tables/concept_progression/biased_gtsrb_f1}

\begin{figure}
    \centering
    \includegraphics[width=.9\linewidth]{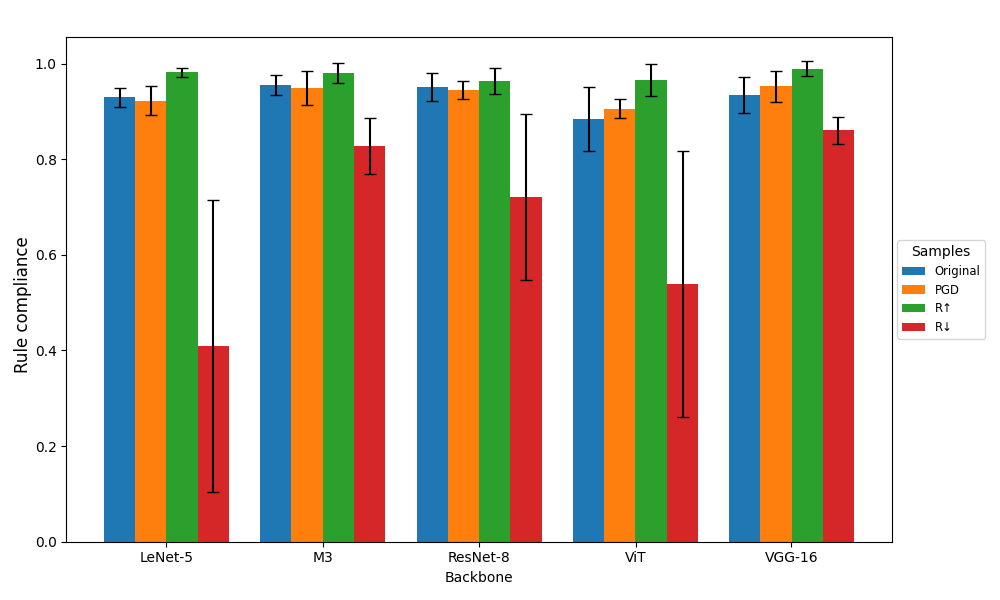}
    \caption{Rule compliance of each architecture, differentiated by samples type. R$\downarrow$ generates, on average, adversarial examples with the lowest rule compliance, and R$\uparrow$ those with the highest.}
    \label{fig:rule-compliance}
\end{figure}

\input{tables/rule_table-original}
\input{tables/rule_table-RCT-AE}

\input{tables/rule_table-IRCT-AE}

\section{LLM Vulnerability Detection: Setup}
\label{app:llm-setup}

This section details the experimental configuration of the LLM extension reported in \Cref{subsec:llm-evaluation}.

\textbf{Dataset and Task.}
We use the Devign benchmark \cite{DBLP:conf/nips/ZhouLSD019}, which contains functions extracted from large-scale open-source C projects, each labeled as vulnerable or safe.
The task is binary classification at the function level.
Each run draws an independent train/validation split.

\textbf{Model and Training.}
The classifier is a $0.8$\,B-parameter Qwen-3.5 base model \cite{qwen3.5} extended with rank-$16$ LoRA adapters \cite{DBLP:conf/iclr/HuSWALWWC22} on the query and value projections of every transformer block.
The model is prompted with the candidate function and asked to output a single token (\textsc{Yes}/\textsc{No}). The cross-entropy loss is computed only on the two-token logit subset.
The baseline is trained for three epochs at learning rate $2{\times}10^{-4}$ with $10\%$ linear warmup and linear decay, weight decay $0.01$, and effective batch size $16$.
Adversarial fine-tuning continues from the baseline LoRA for two epochs at $5{\times}10^{-5}$ on the adversarial subset only.

\textbf{Concepts and Probes.}
The five memory-safety concepts $\ConceptTerm{op}$ (unsafe memory operation), $\ConceptTerm{size}$ (risky size computation), $\ConceptTerm{bounds}$ (bounds check), $\ConceptTerm{ret\_unchk}$ (unchecked return value of a side-effecting call), and $\ConceptTerm{alloc\_unchk}$ (unchecked allocation) are extracted from the abstract syntax tree of the source code using tree-sitter \cite{tree-sitter}.
A linear probe is trained per concept on mean-pooled hidden states of $3{,}000$ training samples following the post-hoc protocol of \Cref{subsec:fuzzy-logic}. Ultimately the probe with the highest validation AUROC is selected.
All five probes have an AUROC between $0.918$ and $0.951$ across runs.

\textbf{Code Transformations.}
We use the eight semantics-preserving code transformations from Risse \emph{et al}~\cite{DBLP:conf/uss/RisseB24}: parameter rename, function rename, dead-code insertion, comment insertion, whitespace insertion, helper-function wrapping, void-call insertion, and comment removal.
For each training sample we sample random transformation chains of length one to four.

\textbf{Adversarial Generation.}
For standard adversarial examples a candidate is retained as \emph{output-targeted} if it flips the trained baseline's $\textsc{Yes}/\textsc{No}$ prediction; as \emph{\ConPAtt{}-targeted} if it flips the binary value of the concept-based property. 
Output-targeted generation yields on average $27{,}032$ candidates per run (range $18{,}017$ to $40{,}064$); \ConPAtt{}-targeted yields $2{,}536$ (range $2{,}475$ to $2{,}623$).
For the comparisons in \Cref{tab:llm-results}, each adversarial set is uniformly subsampled to the smaller of the two counts so that fine-tuning sees the same number of examples in both conditions.

%% file: tables/asr.tex
\begin{table}[t]
    \centering
    \caption{Attack success rate with respect to misclassification for standard attacks and their \ConPAtt{} variants across datasets and architectures. 
    Results are reported as mean $\pm$ standard deviation over multiple runs. }
    \begin{tabular}{llccccc}
        \toprule
         & Attack & LENET-5 & M3 & ResNet8  & VGG-16 & ViT \\
        \midrule
        \multirow{3}{*}{\rotatebox[origin=c]{90}{MNIST}}
			& PGD & $.898_{\pm.084}$ & $.627_{\pm.114}$ & $.890_{\pm.013}$ & $1.0_{\pm.000}$ & $.570_{\pm.102}$ \\
			& R$\uparrow$  & $.174_{\pm.028}$ & $.413_{\pm.093}$ & $.893_{\pm.011}$ & $.970_{\pm.007}$ & $.511_{\pm.096}$ \\
			& R$\downarrow$  & $.401_{\pm.261}$ & $.381_{\pm.142}$ & $.896_{\pm.011}$ & $.861_{\pm.002}$ & $.355_{\pm.065}$ \\
        \midrule
        \multirow{3}{*}{\rotatebox[origin=c]{90}{GTSRB}}
            & PGD & $.943_{\pm.012}$ & $.886_{\pm.024}$ & $.997_{\pm.002}$ & $1.0_{\pm.000}$ & $1.0_{\pm.000}$ \\
			& R$\uparrow$ & $.865_{\pm.026}$ & $.913_{\pm.017}$ & $1.0_{\pm.000}$ & $1.0_{\pm.000}$ & $1.0_{\pm.000}$ \\
			& R$\downarrow$ & $.689_{\pm.010}$ & $.767_{\pm.033}$ & $.994_{\pm.007}$ & $.999_{\pm.000}$ & $.992_{\pm.004}$ \\
        \midrule
        \multirow{3}{*}{\rotatebox[origin=c]{90}{Biased}}
            & PGD & $.988_{\pm.005}$ & $.862_{\pm.092}$ & $.991_{\pm.006}$ & $1.0_{\pm.000}$ & $1.0_{\pm.000}$ \\ 
			& R$\uparrow$ & $.968_{\pm.006}$ & $.885_{\pm.077}$ & $.993_{\pm.008}$ & $1.0_{\pm.000}$ & $.999_{\pm.001}$ \\
			& R$\downarrow$ & $.897_{\pm.019}$ & $.709_{\pm.055}$ & $.991_{\pm.005}$ & $.999_{\pm.000}$ & $.997_{\pm.002}$ \\
        \bottomrule
    \end{tabular}

    \label{tab:asr-result}
\end{table}

%% file: tables/f1_table_AA_comparison.tex
\begin{table}
    \caption{
    F1 scores \textbf{on original data} after adversarial retraining.
    Each model is retrained using adversarial examples generated by the corresponding method.
    Values are reported as mean $\pm$ standard deviation across five runs.}
    \centering
    \begin{tabular}{lccccc}
    	\toprule
    	Retrained & LeNet-5 & M3 & ResNet-8 & VGG-16 & ViT \\
    	\midrule
    	No retraining & $.988_{\pm.003}$ & $.992_{\pm.001}$ & $.986_{\pm.005}$ & $.992_{\pm.001}$ & $.916_{\pm.012}$ \\
    	\midrule
    	PGD & $.997_{\pm.001}$ & $.998_{\pm.001}$ & $.998_{\pm.000}$ & $.986_{\pm.006}$ & $.940_{\pm.011}$ \\
    	PGD$\downarrow$ & $.998_{\pm.000}$ & $.998_{\pm.001}$ & $.998_{\pm.000}$ & $.913_{\pm.253}$ & $.937_{\pm.010}$ \\
    	\textcolor{gray}{PGD$\uparrow$} & \textcolor{gray}{$.994_{\pm.001}$} & \textcolor{gray}{$.992_{\pm.002}$} & \textcolor{gray}{$.994_{\pm.001}$} & \textcolor{gray}{$.853_{\pm.346}$} & \textcolor{gray}{$.937_{\pm.015}$} \\
    	\midrule
    	FGSM & $.991_{\pm.001}$ & $.991_{\pm.002}$ & $.989_{\pm.009}$ & $.986_{\pm.004}$ & $.939_{\pm.004}$ \\
    	FGSM$\downarrow$  & $.992_{\pm.002}$ & $.992_{\pm.002}$ & $.993_{\pm.002}$ & $.847_{\pm.344}$ & $.945_{\pm.011}$ \\
    	\midrule
    	VMI-FGSM & $.991_{\pm.002}$ & $.991_{\pm.003}$ & $.992_{\pm.002}$ & $.986_{\pm.005}$ & $.934_{\pm.012}$ \\
    	VMI-FGSM$\downarrow$  & $.993_{\pm.001}$ & $.993_{\pm.003}$ & $.994_{\pm.002}$ & $.981_{\pm.009}$ & $.942_{\pm.009}$ \\
    	\bottomrule
    \end{tabular}
    \label{tab:f1-result-AA-comparison-original}
\end{table}

\begin{table}
    \caption{
    F1 scores \textbf{on adversarial examples of PGD, FGSM and VMIFGSM} evaluated across differently retrained models.
    Values are reported as mean $\pm$ standard deviation across five runs.
    }
    \centering
    \begin{tabular}{llccccc}
        \toprule
        AE & Retrained & LeNet-5 & M3 & ResNet-8 & VGG-16 & ViT \\
        \midrule
        \multirow{4}{*}{PGD}
        & No retraining & $.102_{\pm.084}$ & $.373_{\pm.114}$ & $.110_{\pm.013}$ & $.000_{\pm.000}$ & $.430_{\pm.102}$ \\
        & PGD & $.997_{\pm.001}$ & $.998_{\pm.001}$ & $.998_{\pm.000}$ & $.984_{\pm.008}$ & $.927_{\pm.012}$ \\
        & R$\downarrow$ & $.938_{\pm.033}$ & $.995_{\pm.001}$ & $.994_{\pm.002}$ & $.909_{\pm.252}$ & $.910_{\pm.014}$ \\
        & \textcolor{gray}{R$\uparrow$} & \textcolor{gray}{$.919_{\pm.025}$} & \textcolor{gray}{$.987_{\pm.004}$} & \textcolor{gray}{$.985_{\pm.005}$} & \textcolor{gray}{$.850_{\pm.345}$} & \textcolor{gray}{$.923_{\pm.016}$} \\
        \midrule
        \multirow{3}{*}{FGSM}
        & No retraining & $.270_{\pm.071}$ & $.703_{\pm.065}$ & $.098_{\pm.000}$ & $.102_{\pm.003}$ & $.213_{\pm.061}$ \\
        & FGSM & $.989_{\pm.002}$ & $.989_{\pm.005}$ & $.990_{\pm.004}$ & $.989_{\pm.005}$ & $.916_{\pm.009}$ \\
        & FGSM$\downarrow$ & $.904_{\pm.034}$ & $.983_{\pm.005}$ & $.975_{\pm.010}$ & $.840_{\pm.341}$ & $.908_{\pm.017}$ \\
        \midrule
        \multirow{3}{*}{VMI-FGSM}
        & No retraining & $.176_{\pm.098}$ & $.300_{\pm.113}$ & $.121_{\pm.044}$ & $.024_{\pm.003}$ & $.271_{\pm.106}$ \\
        & VMI-FGSM & $.990_{\pm.005}$ & $.992_{\pm.003}$ & $.992_{\pm.002}$ & $.984_{\pm.007}$ & $.927_{\pm.013}$ \\
        & VMI-FGSM$\downarrow$ & $.937_{\pm.033}$ & $.981_{\pm.006}$ & $.978_{\pm.010}$ & $.975_{\pm.012}$ & $.910_{\pm.016}$ \\
        \bottomrule
    \end{tabular}
    
    \label{tab:f1-result-AA-comparison-AE}
\end{table}

%% file: tables/concept_progression/mnist_f1.tex
\begin{table}
    \caption{
        F1 scores \textbf{on original MNSIT concept data} before and after adversarial retraining, reported as mean $\pm$ standard deviation across fifteen runs.
    }
    \centering
    \begin{tabular}{llccccc}
        \toprule
        Concept & Retrained  & LeNet-5 & M3 & ResNet-8 & VGG-16 & ViT \\
        \midrule
        \multirow[t]{4}{*}{Circle} & No retraining & $.803_{\pm.042}$ & $.896_{\pm.014}$ & $.921_{\pm.030}$ & $.962_{\pm.002}$ & $.766_{\pm.018}$ \\
         & PGD & $.824_{\pm.040}$ & $.909_{\pm.013}$ & $.981_{\pm.005}$ & $.906_{\pm.018}$ & $.825_{\pm.018}$ \\
         & R$\downarrow$ & $.813_{\pm.043}$ & $.915_{\pm.021}$ & $.981_{\pm.004}$ & $.908_{\pm.031}$ & $.826_{\pm.020}$ \\
         & R$\uparrow$ & $.818_{\pm.038}$ & $.923_{\pm.014}$ & $.981_{\pm.004}$ & $.883_{\pm.139}$ & $.815_{\pm.030}$ \\
        \midrule
        \multirow[t]{4}{*}{Crossing} & No retraining & $.540_{\pm.031}$ & $.896_{\pm.039}$ & $.816_{\pm.044}$ & $.913_{\pm.007}$ & $.539_{\pm.050}$ \\
         & PGD & $.517_{\pm.066}$ & $.848_{\pm.046}$ & $.967_{\pm.008}$ & $.714_{\pm.074}$ & $.623_{\pm.072}$ \\
         & R$\downarrow$ & $.554_{\pm.064}$ & $.873_{\pm.041}$ & $.970_{\pm.004}$ & $.704_{\pm.105}$ & $.606_{\pm.045}$ \\
         & R$\uparrow$ & $.570_{\pm.022}$ & $.869_{\pm.040}$ & $.968_{\pm.008}$ & $.699_{\pm.197}$ & $.624_{\pm.042}$ \\
        \midrule
        \multirow[t]{4}{*}{Curve} & No retraining & $.855_{\pm.015}$ & $.949_{\pm.012}$ & $.936_{\pm.015}$ & $.983_{\pm.001}$ & $.840_{\pm.022}$ \\
         & PGD & $.853_{\pm.023}$ & $.907_{\pm.020}$ & $.966_{\pm.010}$ & $.884_{\pm.019}$ & $.849_{\pm.019}$ \\
         & R$\downarrow$ & $.839_{\pm.026}$ & $.918_{\pm.017}$ & $.967_{\pm.012}$ & $.893_{\pm.019}$ & $.854_{\pm.017}$ \\
         & R$\uparrow$ & $.853_{\pm.017}$ & $.912_{\pm.025}$ & $.968_{\pm.013}$ & $.832_{\pm.234}$ & $.858_{\pm.017}$ \\
        \midrule
        \multirow[t]{4}{*}{Line} & No retraining & $.818_{\pm.010}$ & $.895_{\pm.021}$ & $.855_{\pm.026}$ & $.944_{\pm.003}$ & $.769_{\pm.039}$ \\
         & PGD & $.821_{\pm.020}$ & $.871_{\pm.015}$ & $.944_{\pm.016}$ & $.864_{\pm.020}$ & $.817_{\pm.015}$ \\
         & R$\downarrow$ & $.818_{\pm.022}$ & $.883_{\pm.012}$ & $.941_{\pm.013}$ & $.859_{\pm.018}$ & $.806_{\pm.034}$ \\
         & R$\uparrow$ & $.832_{\pm.014}$ & $.888_{\pm.016}$ & $.942_{\pm.014}$ & $.845_{\pm.074}$ & $.821_{\pm.007}$ \\
        \midrule
        \multirow[t]{4}{*}{Right Angle} & No retraining & $.664_{\pm.014}$ & $.884_{\pm.048}$ & $.851_{\pm.032}$ & $.930_{\pm.003}$ & $.679_{\pm.027}$ \\
         & PGD & $.627_{\pm.054}$ & $.808_{\pm.046}$ & $.965_{\pm.006}$ & $.795_{\pm.024}$ & $.747_{\pm.025}$ \\
         & R$\downarrow$ & $.659_{\pm.032}$ & $.842_{\pm.041}$ & $.963_{\pm.008}$ & $.775_{\pm.053}$ & $.714_{\pm.039}$ \\
         & R$\uparrow$ & $.682_{\pm.019}$ & $.828_{\pm.039}$ & $.967_{\pm.008}$ & $.772_{\pm.093}$ & $.717_{\pm.030}$ \\
        \midrule
        \multirow[t]{4}{*}{Semi-Circle} & No retraining & $.909_{\pm.021}$ & $.943_{\pm.008}$ & $.931_{\pm.013}$ & $.955_{\pm.005}$ & $.886_{\pm.019}$ \\
         & PGD & $.909_{\pm.036}$ & $.931_{\pm.014}$ & $.963_{\pm.004}$ & $.938_{\pm.016}$ & $.914_{\pm.006}$ \\
         & R$\downarrow$ & $.896_{\pm.033}$ & $.935_{\pm.009}$ & $.965_{\pm.004}$ & $.940_{\pm.019}$ & $.903_{\pm.011}$ \\
         & R$\uparrow$ & $.898_{\pm.029}$ & $.938_{\pm.012}$ & $.963_{\pm.004}$ & $.876_{\pm.243}$ & $.909_{\pm.004}$ \\
        \cline{1-7}
        \midrule
         \multirow[t]{4}{*}{All} & No retraining & $.765_{\pm.128}$ & $.911_{\pm.037}$ & $.885_{\pm.054}$ & $.948_{\pm.023}$ & $.746_{\pm.118}$ \\
         & PGD & $.759_{\pm.145}$ & $.879_{\pm.051}$ & $.965_{\pm.014}$ & $.850_{\pm.083}$ & $.796_{\pm.098}$ \\
         & R$\downarrow$ & $.763_{\pm.125}$ & $.895_{\pm.041}$ & $.964_{\pm.015}$ & $.847_{\pm.097}$ & $.785_{\pm.103}$ \\
         & R$\uparrow$ & $.775_{\pm.117}$ & $.893_{\pm.045}$ & $.965_{\pm.015}$ & $.818_{\pm.183}$ & $.791_{\pm.098}$ \\
        \cline{1-7}
        \bottomrule
    \end{tabular}

    \label{tab:f1-mnist-result-original}
\end{table}

%% file: tables/concept_progression/gtsrb_f1.tex
\begin{table}
    \caption{
        F1 scores \textbf{on original GTSRB concept data} of each concept before and after adversarial retraining, reported as mean $\pm$ standard deviation across fifteen runs.
    }
    \centering
    \begin{tabular}{llccccc}
        \toprule
        Concept & Retrained & LeNet-5 & M3 & ResNet-8 & VGG-16 & ViT \\
        \midrule
        \multirow[t]{4}{*}{Circle} & No retraining & $.967_{\pm.019}$ & $.999_{\pm.001}$ & $1.0_{\pm.000}$ & $1.0_{\pm.000}$ & $.999_{\pm.000}$ \\
         & PGD & $.729_{\pm.173}$ & $.990_{\pm.007}$ & $.905_{\pm.144}$ & $.447_{\pm.324}$ & $.906_{\pm.052}$ \\
         & R$\downarrow$ & $.793_{\pm.038}$ & $.991_{\pm.005}$ & $.897_{\pm.149}$ & $.463_{\pm.354}$ & $.916_{\pm.059}$ \\
         & R$\uparrow$ & $.682_{\pm.176}$ & $.981_{\pm.016}$ & $.913_{\pm.155}$ & $.478_{\pm.362}$ & $.960_{\pm.027}$ \\
        \midrule
        \multirow[t]{4}{*}{Diamond} & No retraining & $.998_{\pm.001}$ & $1.0_{\pm.000}$ & $1.0_{\pm.000}$ & $1.0_{\pm.000}$ & $1.0_{\pm.000}$ \\
         & PGD & $.942_{\pm.117}$ & $1.0_{\pm.000}$ & $.921_{\pm.154}$ & $.996_{\pm.002}$ & $.987_{\pm.006}$ \\
         & R$\downarrow$ & $.986_{\pm.007}$ & $1.0_{\pm.000}$ & $.917_{\pm.160}$ & $.997_{\pm.002}$ & $.989_{\pm.011}$ \\
         & R$\uparrow$ & $.910_{\pm.123}$ & $1.0_{\pm.000}$ & $.923_{\pm.154}$ & $.999_{\pm.001}$ & $.987_{\pm.017}$ \\
        \midrule
        \multirow[t]{4}{*}{Inverted Triangle} & No retraining & $.999_{\pm.000}$ & $1.0_{\pm.000}$ & $1.0_{\pm.000}$ & $1.0_{\pm.000}$ & $1.0_{\pm.000}$ \\
         & PGD & $.961_{\pm.089}$ & $1.0_{\pm.000}$ & $.978_{\pm.046}$ & $.999_{\pm.001}$ & $.978_{\pm.017}$ \\
         & R$\downarrow$ & $.997_{\pm.002}$ & $1.0_{\pm.000}$ & $.982_{\pm.041}$ & $.997_{\pm.003}$ & $.973_{\pm.022}$ \\
         & R$\uparrow$ & $.952_{\pm.091}$ & $1.0_{\pm.000}$ & $.988_{\pm.027}$ & $.999_{\pm.002}$ & $.986_{\pm.007}$ \\
        \midrule
        \multirow[t]{4}{*}{LKW no overtaken} & No retraining & $.631_{\pm.123}$ & $.658_{\pm.256}$ & $.648_{\pm.147}$ & $.756_{\pm.062}$ & $.426_{\pm.124}$ \\
         & PGD & $.368_{\pm.139}$ & $.803_{\pm.038}$ & $.514_{\pm.144}$ & $.005_{\pm.006}$ & $.387_{\pm.150}$ \\
         & R$\downarrow$ & $.476_{\pm.071}$ & $.825_{\pm.041}$ & $.473_{\pm.133}$ & $.007_{\pm.009}$ & $.435_{\pm.167}$ \\
         & R$\uparrow$ & $.357_{\pm.125}$ & $.766_{\pm.040}$ & $.542_{\pm.137}$ & $.023_{\pm.023}$ & $.371_{\pm.156}$ \\
        \midrule
        \multirow[t]{4}{*}{Limit 120} & No retraining & $.376_{\pm.043}$ & $.392_{\pm.231}$ & $.384_{\pm.116}$ & $.644_{\pm.055}$ & $.168_{\pm.033}$ \\
         & PGD & $.085_{\pm.035}$ & $.648_{\pm.059}$ & $.252_{\pm.092}$ & $.000_{\pm.000}$ & $.163_{\pm.044}$ \\
         & R$\downarrow$ & $.091_{\pm.030}$ & $.663_{\pm.044}$ & $.232_{\pm.076}$ & $.000_{\pm.000}$ & $.170_{\pm.056}$ \\
         & R$\uparrow$ & $.095_{\pm.046}$ & $.527_{\pm.053}$ & $.275_{\pm.096}$ & $.002_{\pm.004}$ & $.145_{\pm.040}$ \\
        \midrule
        \multirow[t]{4}{*}{Limit 30} & No retraining & $.357_{\pm.032}$ & $.440_{\pm.269}$ & $.302_{\pm.116}$ & $.384_{\pm.119}$ & $.165_{\pm.042}$ \\
         & PGD & $.102_{\pm.041}$ & $.650_{\pm.036}$ & $.288_{\pm.090}$ & $.023_{\pm.051}$ & $.161_{\pm.034}$ \\
         & R$\downarrow$ & $.132_{\pm.037}$ & $.673_{\pm.057}$ & $.274_{\pm.112}$ & $.013_{\pm.027}$ & $.162_{\pm.032}$ \\
         & R$\uparrow$ & $.109_{\pm.058}$ & $.537_{\pm.048}$ & $.287_{\pm.101}$ & $.069_{\pm.121}$ & $.147_{\pm.029}$ \\
        \midrule
        \multirow[t]{4}{*}{Limit 50} & No retraining & $.341_{\pm.050}$ & $.400_{\pm.222}$ & $.287_{\pm.119}$ & $.543_{\pm.112}$ & $.150_{\pm.057}$ \\
         & PGD & $.112_{\pm.041}$ & $.561_{\pm.050}$ & $.229_{\pm.082}$ & $.021_{\pm.040}$ & $.142_{\pm.065}$ \\
         & R$\downarrow$ & $.144_{\pm.018}$ & $.596_{\pm.066}$ & $.212_{\pm.072}$ & $.012_{\pm.025}$ & $.141_{\pm.049}$ \\
         & R$\uparrow$ & $.126_{\pm.051}$ & $.443_{\pm.071}$ & $.225_{\pm.081}$ & $.052_{\pm.088}$ & $.137_{\pm.040}$ \\
        \midrule
        \multirow[t]{4}{*}{No overtaken} & No retraining & $.530_{\pm.044}$ & $.606_{\pm.201}$ & $.537_{\pm.123}$ & $.722_{\pm.064}$ & $.266_{\pm.122}$ \\
         & PGD & $.202_{\pm.067}$ & $.740_{\pm.039}$ & $.377_{\pm.129}$ & $.000_{\pm.000}$ & $.210_{\pm.087}$ \\
         & R$\downarrow$ & $.236_{\pm.044}$ & $.769_{\pm.030}$ & $.373_{\pm.130}$ & $.001_{\pm.003}$ & $.218_{\pm.091}$ \\
         & R$\uparrow$ & $.218_{\pm.047}$ & $.657_{\pm.054}$ & $.388_{\pm.118}$ & $.029_{\pm.048}$ & $.188_{\pm.082}$ \\
        \midrule
        \multirow[t]{4}{*}{Octagon} & No retraining & $.988_{\pm.005}$ & $.997_{\pm.001}$ & $.999_{\pm.000}$ & $1.0_{\pm.000}$ & $.999_{\pm.000}$ \\
         & PGD & $.795_{\pm.184}$ & $.981_{\pm.010}$ & $.883_{\pm.158}$ & $.687_{\pm.267}$ & $.933_{\pm.049}$ \\
         & R$\downarrow$ & $.861_{\pm.033}$ & $.984_{\pm.007}$ & $.841_{\pm.166}$ & $.635_{\pm.326}$ & $.944_{\pm.049}$ \\
         & R$\uparrow$ & $.743_{\pm.206}$ & $.973_{\pm.015}$ & $.912_{\pm.135}$ & $.705_{\pm.305}$ & $.952_{\pm.033}$ \\
        \midrule
        \multirow[t]{4}{*}{Red} & No retraining & $.999_{\pm.000}$ & $1.0_{\pm.000}$ & $1.0_{\pm.000}$ & $1.0_{\pm.000}$ & $1.0_{\pm.000}$ \\
         & PGD & $.971_{\pm.066}$ & $1.0_{\pm.000}$ & $.991_{\pm.017}$ & $.955_{\pm.028}$ & $.977_{\pm.030}$ \\
         & R$\downarrow$ & $.991_{\pm.003}$ & $1.0_{\pm.000}$ & $.959_{\pm.073}$ & $.973_{\pm.019}$ & $.977_{\pm.046}$ \\
         & R$\uparrow$ & $.906_{\pm.172}$ & $1.0_{\pm.000}$ & $.997_{\pm.004}$ & $.987_{\pm.010}$ & $.980_{\pm.043}$ \\
        \midrule
        \multirow[t]{4}{*}{Stop} & No retraining & $.596_{\pm.081}$ & $.653_{\pm.257}$ & $.457_{\pm.223}$ & $.904_{\pm.118}$ & $.196_{\pm.061}$ \\
         & PGD & $.126_{\pm.064}$ & $.760_{\pm.082}$ & $.356_{\pm.168}$ & $.011_{\pm.015}$ & $.176_{\pm.055}$ \\
         & R$\downarrow$ & $.157_{\pm.062}$ & $.770_{\pm.051}$ & $.322_{\pm.161}$ & $.018_{\pm.021}$ & $.191_{\pm.047}$ \\
         & R$\uparrow$ & $.108_{\pm.056}$ & $.677_{\pm.080}$ & $.366_{\pm.168}$ & $.077_{\pm.058}$ & $.163_{\pm.041}$ \\
        \midrule
        \multirow[t]{4}{*}{White with red border} & No retraining & $.998_{\pm.001}$ & $1.0_{\pm.000}$ & $1.0_{\pm.000}$ & $1.0_{\pm.000}$ & $1.0_{\pm.001}$ \\
         & PGD & $.958_{\pm.085}$ & $.999_{\pm.001}$ & $.980_{\pm.062}$ & $.888_{\pm.097}$ & $.986_{\pm.019}$ \\
         & R$\downarrow$ & $.990_{\pm.002}$ & $.999_{\pm.001}$ & $.982_{\pm.063}$ & $.843_{\pm.137}$ & $.989_{\pm.015}$ \\
         & R$\uparrow$ & $.928_{\pm.139}$ & $.999_{\pm.001}$ & $.981_{\pm.062}$ & $.970_{\pm.027}$ & $.992_{\pm.007}$ \\
        \midrule
        \multirow[t]{4}{*}{Yellow} & No retraining & $1.0_{\pm.000}$ & $1.0_{\pm.000}$ & $1.0_{\pm.000}$ & $1.0_{\pm.000}$ & $1.0_{\pm.000}$ \\
         & PGD & $.922_{\pm.205}$ & $1.0_{\pm.000}$ & $.984_{\pm.026}$ & $.930_{\pm.047}$ & $1.0_{\pm.000}$ \\
         & R$\downarrow$ & $.993_{\pm.004}$ & $1.0_{\pm.000}$ & $.951_{\pm.048}$ & $.934_{\pm.053}$ & $1.0_{\pm.000}$ \\
         & R$\uparrow$ & $.856_{\pm.239}$ & $1.0_{\pm.000}$ & $.990_{\pm.019}$ & $.984_{\pm.024}$ & $1.0_{\pm.000}$ \\
        \cline{1-7}
        \bottomrule
    \end{tabular}

    \label{tab:f1-gtsrb-result-original}
\end{table}

\begin{table}
    \caption{
        F1 scores \textbf{on original GTSRB concept data} of concept groups before and after adversarial retraining, reported as mean $\pm$ standard deviation across fifteen runs.
    }
    \centering
    \begin{tabular}{llccccc}
        \toprule
        Group & Retrained & LeNet-5 & M3 & ResNet-8 & VGG-16 & ViT \\
        \midrule
        \multirow[t]{4}{*}{Colour} & No retraining & $.999_{\pm.001}$ & $1.0_{\pm.000}$ & $1.0_{\pm.000}$ & $1.0_{\pm.000}$ & $1.0_{\pm.000}$ \\
         & PGD & $.950_{\pm.132}$ & $1.0_{\pm.001}$ & $.985_{\pm.039}$ & $.924_{\pm.069}$ & $.988_{\pm.022}$ \\
         & R$\downarrow$ & $.992_{\pm.003}$ & $1.0_{\pm.001}$ & $.964_{\pm.062}$ & $.917_{\pm.100}$ & $.989_{\pm.029}$ \\
         & R$\uparrow$ & $.896_{\pm.186}$ & $1.0_{\pm.001}$ & $.989_{\pm.037}$ & $.980_{\pm.022}$ & $.991_{\pm.026}$ \\
        \midrule
        \multirow[t]{4}{*}{Shape} & No retraining & $.988_{\pm.017}$ & $.999_{\pm.001}$ & $1.0_{\pm.000}$ & $1.0_{\pm.000}$ & $.999_{\pm.000}$ \\
         & PGD & $.857_{\pm.173}$ & $.993_{\pm.010}$ & $.922_{\pm.135}$ & $.782_{\pm.311}$ & $.951_{\pm.049}$ \\
         & R$\downarrow$ & $.909_{\pm.090}$ & $.994_{\pm.008}$ & $.909_{\pm.145}$ & $.773_{\pm.331}$ & $.956_{\pm.048}$ \\
         & R$\uparrow$ & $.822_{\pm.189}$ & $.989_{\pm.016}$ & $.934_{\pm.130}$ & $.795_{\pm.319}$ & $.971_{\pm.028}$ \\
        \midrule
        \multirow[t]{4}{*}{Symbol} & No retraining & $.456_{\pm.140}$ & $.530_{\pm.253}$ & $.448_{\pm.184}$ & $.653_{\pm.179}$ & $.223_{\pm.119}$ \\
         & PGD & $.146_{\pm.122}$ & $.685_{\pm.094}$ & $.329_{\pm.146}$ & $.009_{\pm.026}$ & $.193_{\pm.113}$ \\
         & R$\downarrow$ & $.181_{\pm.141}$ & $.709_{\pm.089}$ & $.305_{\pm.145}$ & $.015_{\pm.033}$ & $.204_{\pm.127}$ \\
         & R$\uparrow$ & $.150_{\pm.117}$ & $.605_{\pm.119}$ & $.333_{\pm.155}$ & $.044_{\pm.076}$ & $.180_{\pm.108}$ \\
        \cline{1-7}
        \midrule
        \multirow[t]{4}{*}{All} & No retraining & $.724_{\pm.287}$ & $.765_{\pm.295}$ & $.724_{\pm.305}$ & $.827_{\pm.215}$ & $.611_{\pm.398}$ \\
         & PGD & $.522_{\pm.403}$ & $.840_{\pm.170}$ & $.639_{\pm.336}$ & $.426_{\pm.454}$ & $.580_{\pm.397}$ \\
         & R$\downarrow$ & $.563_{\pm.399}$ & $.853_{\pm.157}$ & $.619_{\pm.341}$ & $.425_{\pm.453}$ & $.587_{\pm.395}$ \\
         & R$\uparrow$ & $.502_{\pm.387}$ & $.799_{\pm.212}$ & $.645_{\pm.340}$ & $.459_{\pm.457}$ & $.580_{\pm.408}$ \\
        \cline{1-7}
        \bottomrule
    \end{tabular}

    \label{tab:f1-gtsrb-result-summary}
\end{table}

%% file: tables/concept_progression/biased_gtsrb_f1.tex
\begin{table}
    \caption{
        F1 scores \textbf{on original GTSRB concept data} of each concept before and after adversarial retraining \textbf{for biased-trained models}, reported as mean $\pm$ standard deviation across fifteen runs.
    }
    \centering
    \begin{tabular}{llccccc}
        \toprule
         Concept & Retrained & LeNet-5 & M3 & ResNet-8 & VGG-16 & ViT \\
        \midrule
        \multirow[t]{4}{*}{Circle} & No retraining & $.973_{\pm.004}$ & $.998_{\pm.001}$ & $1.0_{\pm.000}$ & $1.0_{\pm.000}$ & $.999_{\pm.000}$ \\
         & PGD & $.388_{\pm.115}$ & $.977_{\pm.023}$ & $.814_{\pm.248}$ & $.781_{\pm.169}$ & $.784_{\pm.074}$ \\
         & R$\downarrow$ & $.412_{\pm.122}$ & $.987_{\pm.009}$ & $.822_{\pm.244}$ & $.746_{\pm.182}$ & $.822_{\pm.100}$ \\
         & R$\uparrow$ & $.386_{\pm.095}$ & $.985_{\pm.022}$ & $.816_{\pm.252}$ & $.779_{\pm.166}$ & $.790_{\pm.097}$ \\
        \midrule
        \multirow[t]{4}{*}{Diamond} & No retraining & $.997_{\pm.001}$ & $1.0_{\pm.000}$ & $1.0_{\pm.000}$ & $1.0_{\pm.000}$ & $1.0_{\pm.000}$ \\
         & PGD & $.594_{\pm.078}$ & $1.0_{\pm.000}$ & $.962_{\pm.037}$ & $.990_{\pm.008}$ & $.856_{\pm.106}$ \\
         & R$\downarrow$ & $.621_{\pm.134}$ & $.999_{\pm.001}$ & $.952_{\pm.078}$ & $.991_{\pm.007}$ & $.880_{\pm.081}$ \\
         & R$\uparrow$ & $.609_{\pm.112}$ & $1.0_{\pm.000}$ & $.979_{\pm.022}$ & $.993_{\pm.006}$ & $.862_{\pm.110}$ \\
        \midrule
        \multirow[t]{4}{*}{Inverted Triangle} & No retraining & $.998_{\pm.001}$ & $1.0_{\pm.000}$ & $1.0_{\pm.000}$ & $1.0_{\pm.000}$ & $.999_{\pm.001}$ \\
         & PGD & $.708_{\pm.102}$ & $1.0_{\pm.000}$ & $.987_{\pm.022}$ & $.993_{\pm.008}$ & $.957_{\pm.020}$ \\
         & R$\downarrow$ & $.745_{\pm.156}$ & $1.0_{\pm.000}$ & $.981_{\pm.056}$ & $.992_{\pm.015}$ & $.957_{\pm.039}$ \\
         & R$\uparrow$ & $.713_{\pm.085}$ & $1.0_{\pm.000}$ & $.987_{\pm.024}$ & $.995_{\pm.009}$ & $.955_{\pm.026}$ \\
        \midrule
        \multirow[t]{4}{*}{LKW no overtaken} & No retraining & $.275_{\pm.023}$ & $.618_{\pm.128}$ & $.639_{\pm.095}$ & $.907_{\pm.038}$ & $.693_{\pm.074}$ \\
         & PGD & $.080_{\pm.027}$ & $.428_{\pm.150}$ & $.335_{\pm.247}$ & $.042_{\pm.042}$ & $.254_{\pm.132}$ \\
         & R$\downarrow$ & $.131_{\pm.050}$ & $.425_{\pm.155}$ & $.322_{\pm.242}$ & $.031_{\pm.030}$ & $.285_{\pm.163}$ \\
         & R$\uparrow$ & $.095_{\pm.028}$ & $.461_{\pm.144}$ & $.334_{\pm.233}$ & $.035_{\pm.037}$ & $.252_{\pm.126}$ \\
        \midrule
        \multirow[t]{4}{*}{Limit 120} & No retraining & $.098_{\pm.016}$ & $.324_{\pm.111}$ & $.464_{\pm.096}$ & $.796_{\pm.011}$ & $.392_{\pm.046}$ \\
         & PGD & $.063_{\pm.014}$ & $.216_{\pm.115}$ & $.124_{\pm.064}$ & $.001_{\pm.005}$ & $.183_{\pm.057}$ \\
         & R$\downarrow$ & $.078_{\pm.021}$ & $.188_{\pm.087}$ & $.132_{\pm.071}$ & $.000_{\pm.000}$ & $.242_{\pm.071}$ \\
         & R$\uparrow$ & $.068_{\pm.015}$ & $.203_{\pm.101}$ & $.134_{\pm.074}$ & $.001_{\pm.004}$ & $.188_{\pm.057}$ \\
        \midrule
        \multirow[t]{4}{*}{Limit 30} & No retraining & $.130_{\pm.024}$ & $.336_{\pm.065}$ & $.295_{\pm.109}$ & $.843_{\pm.022}$ & $.293_{\pm.065}$ \\
         & PGD & $.092_{\pm.028}$ & $.235_{\pm.127}$ & $.168_{\pm.062}$ & $.009_{\pm.025}$ & $.141_{\pm.038}$ \\
         & R$\downarrow$ & $.097_{\pm.029}$ & $.259_{\pm.113}$ & $.161_{\pm.074}$ & $.006_{\pm.013}$ & $.175_{\pm.047}$ \\
         & R$\uparrow$ & $.090_{\pm.022}$ & $.224_{\pm.108}$ & $.173_{\pm.061}$ & $.007_{\pm.015}$ & $.139_{\pm.044}$ \\
        \midrule
        \multirow[t]{4}{*}{Limit 50} & No retraining & $.141_{\pm.025}$ & $.314_{\pm.067}$ & $.292_{\pm.094}$ & $.824_{\pm.088}$ & $.353_{\pm.045}$ \\
         & PGD & $.088_{\pm.033}$ & $.196_{\pm.058}$ & $.158_{\pm.063}$ & $.001_{\pm.003}$ & $.198_{\pm.081}$ \\
         & R$\downarrow$ & $.093_{\pm.023}$ & $.198_{\pm.063}$ & $.170_{\pm.061}$ & $.000_{\pm.000}$ & $.230_{\pm.098}$ \\
         & R$\uparrow$ & $.087_{\pm.029}$ & $.208_{\pm.056}$ & $.164_{\pm.061}$ & $.000_{\pm.000}$ & $.194_{\pm.088}$ \\
        \midrule
        \multirow[t]{4}{*}{No overtaken} & No retraining & $.205_{\pm.024}$ & $.516_{\pm.101}$ & $.526_{\pm.089}$ & $.889_{\pm.055}$ & $.541_{\pm.064}$ \\
         & PGD & $.085_{\pm.013}$ & $.423_{\pm.115}$ & $.199_{\pm.112}$ & $.003_{\pm.013}$ & $.160_{\pm.078}$ \\
         & R$\downarrow$ & $.137_{\pm.033}$ & $.384_{\pm.115}$ & $.213_{\pm.103}$ & $.001_{\pm.003}$ & $.193_{\pm.076}$ \\
         & R$\uparrow$ & $.108_{\pm.027}$ & $.408_{\pm.144}$ & $.226_{\pm.113}$ & $.015_{\pm.034}$ & $.162_{\pm.084}$ \\
        \midrule
        \multirow[t]{4}{*}{Octagon} & No retraining & $.974_{\pm.006}$ & $.996_{\pm.002}$ & $.999_{\pm.001}$ & $.999_{\pm.001}$ & $.998_{\pm.001}$ \\
         & PGD & $.380_{\pm.028}$ & $.969_{\pm.026}$ & $.743_{\pm.184}$ & $.725_{\pm.167}$ & $.803_{\pm.127}$ \\
         & R$\downarrow$ & $.419_{\pm.098}$ & $.969_{\pm.024}$ & $.705_{\pm.208}$ & $.645_{\pm.251}$ & $.884_{\pm.076}$ \\
         & R$\uparrow$ & $.394_{\pm.041}$ & $.974_{\pm.014}$ & $.789_{\pm.178}$ & $.672_{\pm.260}$ & $.833_{\pm.081}$ \\
        \midrule
        \multirow[t]{4}{*}{Red} & No retraining & $.998_{\pm.001}$ & $1.0_{\pm.000}$ & $1.0_{\pm.000}$ & $1.0_{\pm.000}$ & $1.0_{\pm.000}$ \\
         & PGD & $.564_{\pm.164}$ & $.999_{\pm.001}$ & $.984_{\pm.035}$ & $.989_{\pm.006}$ & $.848_{\pm.052}$ \\
         & R$\downarrow$ & $.640_{\pm.143}$ & $.997_{\pm.007}$ & $.943_{\pm.165}$ & $.958_{\pm.035}$ & $.864_{\pm.075}$ \\
         & R$\uparrow$ & $.600_{\pm.193}$ & $.999_{\pm.001}$ & $.995_{\pm.009}$ & $.995_{\pm.005}$ & $.850_{\pm.085}$ \\
        \midrule
        \multirow[t]{4}{*}{Stop} & No retraining & $.164_{\pm.046}$ & $.411_{\pm.111}$ & $.501_{\pm.220}$ & $.977_{\pm.016}$ & $.500_{\pm.129}$ \\
         & PGD & $.077_{\pm.016}$ & $.323_{\pm.130}$ & $.281_{\pm.117}$ & $.048_{\pm.090}$ & $.161_{\pm.051}$ \\
         & R$\downarrow$ & $.085_{\pm.020}$ & $.278_{\pm.115}$ & $.301_{\pm.128}$ & $.021_{\pm.035}$ & $.218_{\pm.068}$ \\
         & R$\uparrow$ & $.077_{\pm.018}$ & $.286_{\pm.149}$ & $.322_{\pm.108}$ & $.064_{\pm.100}$ & $.170_{\pm.041}$ \\
        \midrule
        \multirow[t]{4}{*}{White with red border} & No retraining & $.995_{\pm.001}$ & $1.0_{\pm.000}$ & $1.0_{\pm.000}$ & $1.0_{\pm.000}$ & $.999_{\pm.001}$ \\
         & PGD & $.611_{\pm.174}$ & $.944_{\pm.095}$ & $.925_{\pm.100}$ & $.871_{\pm.144}$ & $.945_{\pm.031}$ \\
         & R$\downarrow$ & $.696_{\pm.208}$ & $.979_{\pm.026}$ & $.912_{\pm.101}$ & $.754_{\pm.245}$ & $.896_{\pm.115}$ \\
         & R$\uparrow$ & $.618_{\pm.174}$ & $.971_{\pm.054}$ & $.946_{\pm.087}$ & $.891_{\pm.076}$ & $.930_{\pm.040}$ \\
        \midrule
        \multirow[t]{4}{*}{Yellow} & No retraining & $1.0_{\pm.001}$ & $1.0_{\pm.000}$ & $1.0_{\pm.000}$ & $1.0_{\pm.000}$ & $1.0_{\pm.000}$ \\
         & PGD & $.506_{\pm.302}$ & $1.0_{\pm.000}$ & $.999_{\pm.002}$ & $.855_{\pm.068}$ & $.998_{\pm.001}$ \\
         & R$\downarrow$ & $.592_{\pm.206}$ & $1.0_{\pm.000}$ & $.990_{\pm.012}$ & $.789_{\pm.147}$ & $.998_{\pm.001}$ \\
         & R$\uparrow$ & $.576_{\pm.138}$ & $1.0_{\pm.000}$ & $.998_{\pm.003}$ & $.901_{\pm.059}$ & $.998_{\pm.001}$ \\
        \cline{1-7}
        \bottomrule
    \end{tabular}

    \label{tab:f1-biased-gtsrb-result-original}
\end{table}

\begin{table}
    \caption{
        F1 scores \textbf{on original GTSRB concept data} of concept groups before and after adversarial retraining \textbf{for biased-trained models}, reported as mean $\pm$ standard deviation across fifteen runs.
    }
    \centering
    \begin{tabular}{llccccc}
        \toprule
         Group & Retrained & LeNet-5 & M3 & ResNet-8 & VGG-16 & ViT \\
        \midrule
        \multirow[t]{4}{*}{Colour} & No retraining & $.997_{\pm.002}$ & $1.0_{\pm.000}$ & $1.0_{\pm.000}$ & $1.0_{\pm.000}$ & $1.0_{\pm.000}$ \\
         & PGD & $.560_{\pm.222}$ & $.981_{\pm.060}$ & $.969_{\pm.068}$ & $.905_{\pm.109}$ & $.930_{\pm.071}$ \\
         & R$\downarrow$ & $.643_{\pm.189}$ & $.992_{\pm.018}$ & $.948_{\pm.114}$ & $.834_{\pm.186}$ & $.919_{\pm.097}$ \\
         & R$\uparrow$ & $.598_{\pm.167}$ & $.990_{\pm.033}$ & $.980_{\pm.055}$ & $.929_{\pm.072}$ & $.926_{\pm.081}$ \\
        \midrule
        \multirow[t]{4}{*}{Shape} & No retraining & $.986_{\pm.013}$ & $.998_{\pm.002}$ & $1.0_{\pm.000}$ & $1.0_{\pm.001}$ & $.999_{\pm.001}$ \\
         & PGD & $.517_{\pm.164}$ & $.986_{\pm.022}$ & $.877_{\pm.183}$ & $.872_{\pm.168}$ & $.850_{\pm.112}$ \\
         & R$\downarrow$ & $.549_{\pm.190}$ & $.989_{\pm.018}$ & $.865_{\pm.197}$ & $.844_{\pm.216}$ & $.886_{\pm.089}$ \\
         & R$\uparrow$ & $.525_{\pm.165}$ & $.990_{\pm.017}$ & $.893_{\pm.176}$ & $.860_{\pm.206}$ & $.860_{\pm.103}$ \\
        \midrule
        \multirow[t]{4}{*}{Symbol} & No retraining & $.170_{\pm.064}$ & $.421_{\pm.143}$ & $.459_{\pm.167}$ & $.857_{\pm.082}$ & $.444_{\pm.151}$ \\
         & PGD & $.080_{\pm.025}$ & $.273_{\pm.164}$ & $.211_{\pm.140}$ & $.017_{\pm.043}$ & $.167_{\pm.088}$ \\
         & R$\downarrow$ & $.100_{\pm.036}$ & $.269_{\pm.148}$ & $.216_{\pm.140}$ & $.011_{\pm.025}$ & $.205_{\pm.103}$ \\
         & R$\uparrow$ & $.086_{\pm.025}$ & $.274_{\pm.162}$ & $.224_{\pm.139}$ & $.020_{\pm.048}$ & $.170_{\pm.087}$ \\
        \cline{1-7}
        \midrule
        \multirow[t]{4}{*}{All} & No retraining & $.581_{\pm.414}$ & $.710_{\pm.307}$ & $.730_{\pm.295}$ & $.928_{\pm.092}$ & $.722_{\pm.298}$ \\
         & PGD & $.308_{\pm.266}$ & $.629_{\pm.376}$ & $.564_{\pm.382}$ & $.452_{\pm.448}$ & $.526_{\pm.372}$ \\
         & R$\downarrow$ & $.345_{\pm.282}$ & $.630_{\pm.376}$ & $.558_{\pm.377}$ & $.425_{\pm.439}$ & $.552_{\pm.362}$ \\
         & R$\uparrow$ & $.321_{\pm.265}$ & $.632_{\pm.377}$ & $.577_{\pm.381}$ & $.455_{\pm.452}$ & $.529_{\pm.372}$ \\
        \cline{1-7}
        \bottomrule
    \end{tabular}

    \label{tab:f1-biased-gtsrb-result-summary}
\end{table}

%% file: tables/rule_table-original.tex
\begin{table*}[htb]
    \centering
    \caption{Rule compliance on original dataset}
    \begin{tabular}{llcccc}
        \toprule
        \multicolumn{2}{c}{Backbone} &  Original & PGD & R$\downarrow$ & \textcolor{gray}{R$\uparrow$} \\
        \midrule
        \multirow{5}{*}{\rotatebox[origin=c]{90}{MNIST}}
        & LENET-5 & $.926_{\pm.006}$ & $.890_{\pm.022}$ & $.895_{\pm.029}$ & \textcolor{gray}{$.887_{\pm.023}$} \\
        & M3 & $.982_{\pm.002}$ & $.957_{\pm.014}$ & $.957_{\pm.014}$ & \textcolor{gray}{$.964_{\pm.009}$} \\
        & ResNet-8 & $.987_{\pm.002}$ & $.973_{\pm.005}$ & $.975_{\pm.006}$ & \textcolor{gray}{ $.978_{\pm.004}$} \\
        & ViT & $.958_{\pm.006}$ & $.903_{\pm.017}$ & $.907_{\pm.014}$ & \textcolor{gray}{$.903_{\pm.014}$} \\
        & VGG-16 & $.985_{\pm.001}$ & $.897_{\pm.022}$ & $.899_{\pm.010}$ & \textcolor{gray}{$.889_{\pm.053}$} \\
        \midrule
        \multirow{5}{*}{\rotatebox[origin=c]{90}{\quad GTSRB}}
        & LENET-5 & $.937_{\pm.021}$ & $.900_{\pm.029}$ & $.916_{\pm.028}$ & \textcolor{gray}{$.906_{\pm.024}$} \\
        & M3 & $.938_{\pm.007}$ & $.933_{\pm.016}$ & $.929_{\pm.019}$ & \textcolor{gray}{$.928_{\pm.014}$} \\
        & ResNet-8 & $.941_{\pm.012}$ & $.940_{\pm.013}$ & $.939_{\pm.012}$ & \textcolor{gray}{$.940_{\pm.012}$} \\
        & ViT & $.827_{\pm.010}$ & $.825_{\pm.036}$ & $.828_{\pm.035}$ & \textcolor{gray}{$.805_{\pm.028}$} \\
        & VGG-16 & $.910_{\pm.008}$ & $.955_{\pm.001}$ & $.956_{\pm.000}$ & \textcolor{gray}{$.955_{\pm.001}$} \\
        \midrule
        \multirow{5}{*}{\rotatebox[origin=c]{90}{\quad Biased}}
        & LENET-5 & $.925_{\pm.024}$ & $.858_{\pm.036}$ & $.856_{\pm.078}$ & \textcolor{gray}{$.853_{\pm.031}$} \\
        & M3 & $.945_{\pm.004}$ & $.931_{\pm.019}$ & $.934_{\pm.014}$ & \textcolor{gray}{$.935_{\pm.014}$} \\
        & ResNet-8 & $.926_{\pm.019}$ & $.938_{\pm.018}$ & $.936_{\pm.018}$ & \textcolor{gray}{$.938_{\pm.016}$} \\
        & ViT & $.869_{\pm.067}$ & $.830_{\pm.020}$ & $.830_{\pm.020}$ & \textcolor{gray}{$.819_{\pm.040}$} \\
        & VGG-16 & $.908_{\pm.018}$ & $.956_{\pm.000}$ & $.956_{\pm.000}$ & \textcolor{gray}{$.956_{\pm.000}$} \\
        \bottomrule
    \end{tabular}
    \label{tab:rule-compliance-result}
\end{table*}

%% file: tables/rule_table-RCT-AE.tex
\begin{table*}[tbh]
    \centering
    \caption{Rule compliance on adversarial examples of R$\downarrow$}
    \begin{tabular}{llccc}
        \toprule
        \multicolumn{2}{c}{Backbone} & Original & PGD & R$\downarrow$ \\
        \midrule
        \multirow{5}{*}{\rotatebox[origin=c]{90}{MNIST}}
        & LENET-5 & $.830_{\pm.037}$ & $.872_{\pm.020}$ & $.851_{\pm.021}$ \\
        & M3 & $.855_{\pm.011}$ & $.962_{\pm.012}$ & $.954_{\pm.013}$ \\
        & ResNet-8 & $.939_{\pm.014}$ & $.972_{\pm.006}$ & $.974_{\pm.006}$ \\
        & ViT & $.903_{\pm.013}$ & $.904_{\pm.019}$ & $.904_{\pm.011}$ \\
        & VGG-16 & $.821_{\pm.005}$ & $.897_{\pm.022}$ & $.899_{\pm.010}$ \\
        \midrule
        \multirow{5}{*}{\rotatebox[origin=c]{90}{\quad GTSRB}}
        & LENET-5 & $.212_{\pm.035}$ & $.839_{\pm.053}$ & $.841_{\pm.045}$ \\
        & M3 & $.850_{\pm.071}$ & $.888_{\pm.048}$ & $.890_{\pm.046}$ \\
        & ResNet-8 & $.606_{\pm.116}$ & $.955_{\pm.011}$ & $.952_{\pm.011}$ \\
        & ViT & $.272_{\pm.072}$ & $.653_{\pm.079}$ & $.563_{\pm.056}$ \\
        & VGG-16 & $.879_{\pm.001}$ & $.956_{\pm.001}$ & $.956_{\pm.001}$ \\
        \midrule
        \multirow{5}{*}{\rotatebox[origin=c]{90}{\quad Biased}}
        & LENET-5 & $.184_{\pm.073}$ & $.805_{\pm.046}$ & $.768_{\pm.126}$ \\
        & M3 & $.777_{\pm.035}$ & $.841_{\pm.028}$ & $.842_{\pm.029}$ \\
        & ResNet-8 & $.616_{\pm.064}$ & $.954_{\pm.016}$ & $.951_{\pm.014}$ \\
        & ViT & $.441_{\pm.098}$ & $.675_{\pm.098}$ & $.645_{\pm.068}$ \\
        & VGG-16 & $.879_{\pm.006}$ & $.956_{\pm.001}$ & $.955_{\pm.001}$ \\
        \bottomrule
    \end{tabular}
    \label{tab:rule-compliance-result-rct-ae}
\end{table*}

%% file: tables/rule_table-IRCT-AE.tex
\begin{table*}[h]
    \centering
    \caption{Rule compliance on adversarial examples of R$\uparrow$}
    \begin{tabular}{llccc}
        \toprule
        \multicolumn{2}{c}{Backbone} & Original & PGD & R$\uparrow$\\
        \midrule
        \multirow{5}{*}{\rotatebox[origin=c]{90}{MNIST}}
        & LENET-5 & $.972_{\pm.008}$ & $.898_{\pm.025}$ & $.922_{\pm.029}$ \\
        & M3 & $.954_{\pm.016}$ & $.952_{\pm.014}$ & $.965_{\pm.007}$ \\
        & ResNet-8 & $.946_{\pm.009}$ & $.970_{\pm.008}$ & $.974_{\pm.005}$ \\
        & ViT & $.923_{\pm.006}$ & $.905_{\pm.019}$ & $.903_{\pm.012}$ \\
        & VGG-16 & $.967_{\pm.002}$ & $.897_{\pm.022}$ & $.890_{\pm.053}$ \\
        \midrule
        \multirow{5}{*}{\rotatebox[origin=c]{90}{ GTSRB}}
        & LENET-5 & $.982_{\pm.005}$ & $.942_{\pm.015}$ & $.943_{\pm.019}$ \\
        & M3 & $.993_{\pm.001}$ & $.953_{\pm.005}$ & $.952_{\pm.005}$ \\
        & ResNet-8 & $.998_{\pm.001}$ & $.964_{\pm.010}$ & $.962_{\pm.006}$ \\
        & ViT & $.997_{\pm.002}$ & $.885_{\pm.038}$ & $.875_{\pm.042}$ \\
        & VGG-16 & $1.0_{\pm.000}$ & $.957_{\pm.001}$ & $.956_{\pm.001}$ \\
        \midrule
        \multirow{5}{*}{\rotatebox[origin=c]{90}{ Biased}}
        & LENET-5 & $.990_{\pm.004}$ & $.889_{\pm.043}$ & $.891_{\pm.031}$ \\
        & M3 & $.992_{\pm.005}$ & $.953_{\pm.009}$ & $.953_{\pm.007}$ \\
        & ResNet-8 & $.947_{\pm.018}$ & $.951_{\pm.017}$ & $.951_{\pm.016}$ \\
        & ViT & $.976_{\pm.018}$ & $.885_{\pm.026}$ & $.884_{\pm.035}$ \\
        & VGG-16 & $1.0_{\pm.000}$ & $.958_{\pm.001}$ & $.957_{\pm.001}$ \\
        \bottomrule
    \end{tabular}
    \label{tab:rule-compliance-result-irct-ae}
\end{table*}